\documentclass[11pt]{article}
\usepackage[margin=1in]{geometry}
\usepackage[utf8]{inputenc}
\usepackage{microtype}
\usepackage[all=normal,paragraphs=tight,bibliography=tight,floats=tight]{savetrees}
\usepackage{listings}
  \usepackage{mathrsfs}
\usepackage{amsfonts,amsthm,amsmath}
\usepackage{amstext}
\usepackage{graphicx,tikz}
\usepackage{comment}
\usepackage{url}

\newtheorem{theorem}{Theorem}
\newtheorem{lemma}[theorem]{Lemma}

\theoremstyle{definition}

\newcommand{\olddaniel}[1]{}


\usepackage{listings}





\newcommand{\bran}[1]{branchable\xspace}




\newtheorem{claim}{Claim}


 














\usepackage{amsmath, amssymb, latexsym}
\usepackage{enumerate}

\usepackage{xspace}

\newcommand{\Oh}{{\mathcal{O}}}

\def\cqedsymbol{\ifmmode$\lrcorner$\else{\unskip\nobreak\hfil
\penalty50\hskip1em\null\nobreak\hfil$\lrcorner$
\parfillskip=0pt\finalhyphendemerits=0\endgraf}\fi} 

\newcommand{\cqed}{\renewcommand{\qed}{\cqedsymbol}}









\newcommand{\eps}{\epsilon}

\bibliographystyle{plainurl}

\title{Below all subsets for Minimal Connected Dominating Set\thanks{
The research of S. Saurabh leading to these results has received funding from the European Research Council under the European Union's Seventh Framework Programme (FP/2007-2013) / ERC Grant Agreement no.~306992.
The research of Mi. Pilipczuk is supported by Polish National Science Centre grant UMO-2013/11/D/ST6/03073.
Mi. Pilipczuk is also supported by the Foundation for Polish Science via the START stipend programme.}}
\author{
  Daniel Lokshtanov\thanks{
    Department of Informatics, University of Bergen, Norway, \texttt{daniello@ii.uib.no}.
  }
  \and
  Micha\l{} Pilipczuk\thanks{
    Institute of Informatics, University of Warsaw, Poland, \texttt{michal.pilipczuk@mimuw.edu.pl}.
  }
  \and 
  Saket Saurabh\thanks{
    Institute of Mathematical Sciences, India, \texttt{saket@imsc.res.in}, and
    Department of Informatics, University of Bergen, Norway, \texttt{Saket.Saurabh@ii.uib.no}.
  }
}

\date{}

\begin{document}

\maketitle

\begin{abstract}
A vertex subset $S$ in a graph $G$ is a {\em dominating set} if every vertex not contained in $S$ has a neighbor in $S$. 
A dominating set $S$ is a {\em connected dominating set} if the subgraph $G[S]$ induced by $S$ is connected. 
A connected dominating set $S$ is a {\em minimal connected dominating set} if no proper subset of $S$ is also a connected dominating set. 
We prove that there exists a constant $\eps > 10^{-50}$ such that every graph $G$ on $n$ vertices has at most $\Oh(2^{(1-\eps)n})$ minimal connected dominating sets. 
For the same $\eps$ we also give an algorithm with running time $2^{(1-\eps)n}\cdot n^{\Oh(1)}$ to enumerate all minimal connected dominating sets in an input graph $G$.

\end{abstract}


\section{Introduction}

In the field of {\em{enumeration}} algorithms, the following setting is commonly considered.
Suppose we have some universe $U$ and some property $\Pi$ of subsets of $U$.
For instance, $U$ can be the vertex set of a graph $G$, whereas $\Pi$ may be the property of being an independent set in $G$, or a dominating set of $G$, etc.
Let $\mathcal{F}$ be the family of all {\em{solutions}}: subsets of $U$ satisfying~$\Pi$.
Then we would like to find an algorithm that enumerates all solutions quickly,
optimally in time $|\mathcal{F}|\cdot n^{\Oh(1)}$, where $n$ is the size of the universe.
Such an enumeration algorithm could be then used as a subroutine for more general problems.
For instance, if one looks for an independent set of maximum possible weight is a vertex-weighted graph,
if suffices to browse through all inclusion-wise maximal independent sets (disregarding the weights) and pick the one with the largest weight.
If maximal independent sets can be efficiently enumerated and their number turns out to be small, such an algorithm could be useful in practice.

The other motivation for enumeration algorithms stems from extremal problems for graph properties.
Suppose we would like to know what is, say, the maximum possible number of inclusion-wise maximal independent sets in a graph on $n$ vertices.
Then it suffices to find an enumeration algorithm for maximal independent sets, and bound its (exponential) running time in terms of $n$.
The standard approach for the design of such an enumeration algorithm is to construct a smart branching procedure.
The run of such a branching procedure can be viewed as a tree where the nodes correspond to moments when the algorithm branches into two or more subprocedures, fixing different choices for the shape of a solution.
Then the leaves of such a search tree correspond to the discovered solutions.
By devising smart branching rules one can limit the number of leaves of the search tree, which both estimates the running time of the enumeration algorithm, and provides a combinatorial upper bound
on the number of solution. 
For instance, the classic proof of Moon and Moser~\cite{moon-moser} that the number of maximal independent sets in an $n$-vertex graph is at most $3^{n/3}$,
can be easily turned into an algorithm enumerating this family in time $3^{n/3}\cdot n^{\Oh(1)}$.

However, the analysis of branching algorithms is often quite nontrivial. The technique usually used, called {\em{Measure\&Conquer}}, involves assigning auxiliary potential measures to
 subinstances obtained during branching, and analyzing how the potentials change during performing the branching rules.
Perhaps the most well-known result obtained using Measure\&Conquer is the $\Oh(1.7159^n)$-time algorithm of Fomin et al.~\cite{FominGPS08} for enumerating minimal dominating sets. Note that
in particular this implies an $\Oh(1.7159^n)$ upper bound on the number of minimal dominating sets.
We refer to the book of Fomin and Kratsch~\cite{FominK10} for a broader discussion of branching algorithms and the Measure\&Conquer technique.

The main limitation of such branching strategies is that, without any closer insight, they can only handle properties that are somehow local.
This is because pruning unnecessary branches is usually done by analyzing specific local configurations in the graph.
For this reason, it is difficult to add requirements of global nature to the framework.
To give an example, consider the family of minimal connected dominating sets of a graph: a subset of vertices $S$ is a {\em{minimal connected dominating set}} if it induces a connected subgraph, is a dominating
set, and no its proper subset has both these properties. While the number of minimal dominating sets of an $n$-vertex graph is bounded by $\Oh(1.7159^n)$ by the result of Fomin et al.~\cite{FominGPS08},
for the number of minimal connected dominating sets no upper bound of the form $\Oh(c^n)$ for any $c<2$ was known prior to this work.
The question about the existence of such an upper bound was asked by Golovach et al.~\cite{GolovachHK16}, 
and then re-iterated by Kratsch~\cite{report-enumeration} during the recent Lorentz workshop ``Enumeration Algorithms using Structure''.
We remark that the problem of finding a minimum-size connected dominating set admits an algorithm with running time $\Oh(1.9407^n)$~\cite{FominGK08}, 
but the method does not generalize to enumerating minimal connected dominating sets.

\subparagraph*{Our results.} We resolve this question in affirmative by proving the following theorem.

\begin{theorem}\label{thm:main} 
There is a constant $\eps>10^{-50}$ such that every graph $G$ on $n$ vertices has at most $\Oh(2^{(1-\eps)n})$ minimal connected dominating sets. 
Further, there is an algorithm that given as input a graph $G$, lists all minimal connected dominating sets of $G$ in time $2^{(1-\eps)n}\cdot n^{\Oh(1)}$.
\end{theorem}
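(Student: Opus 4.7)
The plan hinges on the following standard minimality dichotomy. For each vertex $v$ of a minimal connected dominating set $S$, because $S\setminus\{v\}$ fails to be a connected dominating set, either (a) $v$ is a cut vertex of $G[S]$, or (b) $v$ has a \emph{private neighbour} in $V\setminus S$, i.e.\ a vertex $u\notin S$ whose only neighbour in $S$ is $v$. Writing $L(S)\subseteq S$ for the set of vertices satisfying (b) (the ``leaf-like'' ones), a choice of one private neighbour per vertex gives an injection $L(S)\hookrightarrow V\setminus S$. Equivalently, every MCDS $S$ comes equipped with a matching $M_S$ saturating $L(S)$, each edge of which is ``oriented'' in the sense that one endpoint lies in $S$ and the other outside.

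The overall strategy is to design a branching enumeration algorithm whose number of leaves simultaneously bounds the number of MCDSs. I will branch on the interaction between the unknown solution $S$ and a preprocessed structural certificate $\mathcal{X}\subseteq V(G)$, namely a collection of ``high-privacy'' configurations computable greedily from $G$. The point of $\mathcal{X}$ is that knowing $S\cap\mathcal{X}$ forces many additional membership decisions, so the $2^{n-|\mathcal{X}|}$ freedom on the remaining vertices is cut down to $2^{(1-\eps)n}$ total across all branches. Concretely, I aim for $\mathcal{X}$ to be a constant-fraction subset of $V(G)$ built from an induced matching together with common neighbourhoods of its endpoints, chosen so that each MCDS is forced to ``orient'' the edges of the matching in a constrained way.

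The analysis is a win-win driven by whether such a large certificate exists. If it does, the orientation constraint on $\mathcal{X}$ directly saves an $\eps$-fraction of bits across all branches. If no large certificate can be extracted, then $G$ must be locally dense in a large region, and a second win-win takes over: for every MCDS $S$, either $|L(S)|$ is large, in which case the injection $L(S)\hookrightarrow V\setminus S$ forces $|S|\leq n/2$ together with matching-based structural constraints, and an entropy-style count over such matchings beats $2^n$; or $|L(S)|$ is small, in which case most of $S$ consists of cut vertices, forming a highly tree-like ``skeleton'' whose isomorphism type admits only sub-exponentially many choices. In each regime I invoke a Measure\&Conquer potential that tracks the deficit from maximal constrainedness.

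The main obstacle is quantitative: to extract any concrete $\eps>10^{-50}$, one must ensure that none of the subcases leaks back to $2^n$ and that the branching costs compound correctly across the entire recursion tree. Calibrating these accumulated savings is what forces the explicit constant to be so small. On the algorithmic side, every combinatorial case distinction is polynomial-time detectable, so the branching rules can be implemented so that the search tree has at most $2^{(1-\eps)n}$ leaves, each producing a distinct MCDS in polynomial additional work, yielding the promised running time.
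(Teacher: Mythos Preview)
Your starting dichotomy (cut vertex versus private neighbour) is exactly the engine the paper uses, but the way you propose to exploit it has a genuine gap in the ``small $L(S)$'' branch of your second win-win. Saying that the tree-like skeleton of $G[S]$ has only sub-exponentially many \emph{isomorphism types} does not bound the number of \emph{labelled} subsets $S\subseteq V(G)$ that realise such a skeleton; a path on $n/2$ vertices has one isomorphism type but may embed in $G$ in exponentially many ways, and you give no mechanism to control embeddings. The paper's resolution of this branch is quite different: it observes that if almost all of $S$ are cut vertices of $G[S]$, then almost all of $S$ have degree exactly~$2$ in $G[S]$ (Lemma~\ref{lem:cutAndCount}), hence have at most two neighbours in $S$ in the ambient graph $G$. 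This unifies the two halves of your dichotomy into a single conclusion --- every large MCDS $S$ forces a linear fraction of $V(G)$ to have $S$-degree $\le 2$ --- and then a Hoeffding-type concentration argument over colour classes of the square graph shows that for a \emph{random} $S$ this event has probability $\le 2^{-\Omega(n)}$. But that concentration step only works when $G$ admits a bounded-degree spanning subgraph in which few vertices have low degree.

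That caveat is precisely what drives the paper's outer case distinction, and it is where your ``induced matching certificate'' diverges without a clear substitute. The paper's greedy procedure (Lemma~\ref{lem:subgraphOrPartition}) either produces the bounded-degree spanning subgraph needed for the probabilistic bound, or exposes a large set $L$ of vertices with few neighbours outside a tiny hub $H$; in the latter regime a direct branching over all $2^{n-|L|}$ assignments on $H\cup R$, followed by a careful count of minimal $(I,O)$-extensions inside $L$, finishes the job. Your certificate $\mathcal{X}$ built from an induced matching plus common neighbourhoods is never made concrete enough to see what it buys: you do not explain why failure to extract a large $\mathcal{X}$ yields local density of the specific shape that makes a low-$S$-degree argument go through, nor how the ``orientation constraint'' on the matching translates into a quantified saving (an induced matching edge $uv$ with one endpoint in $S$ and one outside only rules out one of four patterns, and you have not argued that a constant fraction of edges must be so oriented). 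As written, neither branch of your first win-win is connected to a working bound, so the proposal is a plausible outline but not yet a proof.
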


Note that we give not only an improved combinatorial upper bound, but also a corresponding enumeration algorithm.
The improvement is minuscule, however our main motivation was just to break the trivial $2^n$ upper bound of enumerating all subsets.
In many places our argumentation could be improved to yield a slightly better bound at the cost of more involved analysis.
We choose not to do it, as we prefer to keep the reasoning as simple as possible, while the improvements would not decrease our upper bound drastically anyway.
The main purpose of this work is to show the possibility of achieving an upper bound exponentially smaller than $2^n$, and thus to investigate what tools could be useful for the treatment
of requirements of global nature in the setting of extremal problems for graph properties.

For the proof of Theorem~\ref{thm:main}, clearly it is sufficient to bound the number of minimal connected dominating sets of size roughly $n/2$. 
The starting point is the realization that any vertex $u$ in a minimal connected dominating set $S$ serves one of two possible roles. First, $u$ can be essential
for domination, which means that there is some $v$ not in $S$ such that $u$ is the only neighbor of $v$ in $S$. Second, $u$ can be essential for connectivity, in the sense that after removing $u$, the
subgraph induced by $S$ would become disconnected. Therefore, if we suppose that the vertices essential for domination form a small fraction of $S$, 
we infer that almost every vertex of $G[S]$ is a cut-vertex of this graph. 
It is not hard to convince oneself that then almost every vertex of $S$ has degree at most $2$ in $G[S]$. 

All in all, regardless whether the number of vertices essential for domination is small or large, a large fraction of all the vertices of the graph has at most $2$ neighbors in $S$.
Intuitively, in an ``ordinary'' graph the number of sets $S$ with this property should be significantly smaller than $2^n$.
We prove that this is indeed the case whenever the graph is ``robustly dense'' in the following sense: 
it has a spanning subgraph where almost all vertices have degrees not smaller than some constant $\ell$, but no vertex has degree larger than some (much larger) constant $h$.
For this proof we use the probabilistic method: we show that if $S$ is sampled at random, then the probability that many vertices are adjacent to at most $2$ vertices of $S$ is exponentially small.
The main tool is Chernoff-like concentration of independent random variables.

The remaining case is when the spanning subgraph as described above cannot be found. We attempt at constructing it using a greedy procedure, which in case of failure discovers a different
structure in the graph. We next show that such a structure can be also used to design an algorithm for enumerating minimal connected dominating sets faster than $2^n$, using a more direct
branching strategy. The multiple trade-offs made in this part of the proof are the main reason for why our improvement over the trivial $2^n$ upper bound is so small.

\section{Preliminaries}\label{sec:prelims}
All graphs considered in this paper are simple, i.e., they do not have self-loops or multiple edges connecting the same pair of vertices.
For a graph $G$, by $V(G)$ and $E(G)$ we denote the vertex and edge sets of $G$, respectively.
The {\em{neighborhood}} of a vertex $v$ in a graph $G$ is denoted by $N_G(v)$, and consists of vertices adjacent to $v$.
The {\em{degree}} of $v$, denoted by $d(v)$, is defined the cardinality of its neighborhood. 
For a subset $S \subseteq V(G)$ and vertex $v \in V(G)$ the {\em{$S$-degree}} of $v$, denoted $d_S(v)$, 
is defined to be the number of vertices in $S$ adjacent to $v$. 
A {\em proper coloring} of a graph $G$ with $c$ colors is a function $\phi \colon V(G) \rightarrow \{1, \ldots, c\}$ such that for every edge $uv \in E(G)$ we have $\phi(u) \neq \phi(v)$. 
For a proper coloring $\phi$ of $G$ and integer $i \leq c$, the $i$-th {\em color class} of $\phi$ is the set $V_i=\phi^{-1}(i)$.  
The subgraph of $G$ induced by a vertex subset $S \subseteq V(G)$ is denoted by $G[S]$ and defined to be the graph with vertex set $S$ and edge set $\{uv \in E(G) \colon u,v \in S\}$. 
For a vertex $v \in V(G)$, the graph $G - v$ is simply $G[V(G) \setminus \{v\}]$. 
A subset $I$ of vertices is {\em{independent}} if it induced an {\em{edgeless graph}}, that is, a graph with no edges.
A {\em cutvertex} in a connected graph $G$ is a vertex $v$ such that $G-v$ is disconnected.

We denote $\exp(t)=e^t$.
The probability of an event $A$ is denoted by $\Pr[A]$ and the expected value of a random variable $X$ is denoted by $E[X]$.
We use standard concentration bounds for sums of independent random variables.
In particular, the following variant of the Hoeffding's bound~\cite{H63}, given by Grimmett and Stirzaker \cite[p. 476]{GSbook}, will be used.  
\begin{theorem}[Hoeffding's bound]\label{prop:hoeff}
Suppose $X_1, X_2, \ldots, X_n$ are independent random variables such that $a_i \leq X_i \leq b_i$ for all $i$. 
Let $X = \Sigma_{i=1}^n X_i$. Then: 
$$\Pr[X-E[X] \geq t] \leq \exp\left(\frac{-2t^2}{\Sigma_{i=1}^n \left(b_i-a_i\right)^2}\right).$$
\end{theorem}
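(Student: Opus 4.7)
The plan is to prove Hoeffding's inequality by the standard Chernoff-style moment-generating-function (MGF) method, combined with a lemma of Hoeffding that bounds the MGF of a centered bounded random variable. The overall strategy is: bound the tail by the MGF at a free parameter $s>0$, factor the MGF using independence, bound each factor by a Gaussian-type expression, and finally optimize over $s$.

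First I would introduce a parameter $s>0$ and apply Markov's inequality to the exponential of the deviation. Since $x\mapsto e^{sx}$ is monotone increasing, the event $\{X-E[X]\geq t\}$ coincides with $\{e^{s(X-E[X])}\geq e^{st}\}$, and Markov's inequality gives
$$\Pr[X-E[X]\geq t]\leq e^{-st}\cdot E\bigl[e^{s(X-E[X])}\bigr].$$
Independence of the $X_i$ then factors the MGF of the sum into a product $\prod_{i=1}^n E[e^{s(X_i-E[X_i])}]$, reducing the problem to bounding the MGF of a single centered summand.

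The heart of the argument is Hoeffding's lemma: for a random variable $Y$ with $E[Y]=0$ and $a\leq Y\leq b$, one has $E[e^{sY}]\leq \exp(s^2(b-a)^2/8)$. I would prove it by dominating $e^{sy}$ on $[a,b]$ by its chord $\frac{b-y}{b-a}e^{sa}+\frac{y-a}{b-a}e^{sb}$ (by convexity of the exponential), taking expectations (using $E[Y]=0$), and reparametrizing with $p=-a/(b-a)$ and $u=s(b-a)$ to rewrite the resulting bound as $e^{\phi(u)}$, where $\phi(u)=-pu+\ln(1-p+pe^u)$. A direct computation gives $\phi(0)=\phi'(0)=0$, and writing $\phi''(u)=q(1-q)$ with $q=pe^u/(1-p+pe^u)\in[0,1]$ yields $\phi''(u)\leq 1/4$; Taylor's theorem with remainder then delivers $\phi(u)\leq u^2/8$.

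Applying Hoeffding's lemma to each centered summand $X_i-E[X_i]$ (which takes values in an interval of width $b_i-a_i$) and multiplying the bounds, I obtain
$$\Pr[X-E[X]\geq t]\leq \exp\!\left(-st+\tfrac{s^2}{8}\sum_{i=1}^n(b_i-a_i)^2\right).$$
The final step is to optimize over $s>0$: setting the derivative of the exponent to zero gives $s^\star=4t/\sum_i(b_i-a_i)^2$, and substituting this value produces exactly the claimed bound $\exp(-2t^2/\sum_i(b_i-a_i)^2)$. The main obstacle is keeping the estimate in Hoeffding's lemma tight, since the constant $2$ in the final exponent is controlled entirely by the bound $\phi''(u)\leq 1/4$; any looseness there would propagate directly into a weaker inequality, so the variance-style expression $q(1-q)$ must be handled with care.
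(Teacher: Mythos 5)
Your proof is correct, but note that the paper itself gives no proof of this theorem: it states Hoeffding's bound as a known result, citing Hoeffding's 1963 paper and the Grimmett--Stirzaker textbook, and uses it as a black box in the probabilistic argument of Section~4. So there is no argument in the paper to compare against. What you have written is the standard Chernoff/MGF proof: Markov's inequality applied to $e^{s(X-E[X])}$, factoring by independence, Hoeffding's lemma $E[e^{sY}]\leq \exp(s^2(b-a)^2/8)$ for a centered $Y\in[a,b]$ proved via convexity and the Taylor bound $\phi''(u)=q(1-q)\leq 1/4$, and optimizing at $s^\star = 4t/\sum_i (b_i-a_i)^2$. All steps check out, including the observation that $X_i - E[X_i]$ lies in an interval of width $b_i - a_i$ and the final algebra yielding $-2t^2/\sum_i(b_i-a_i)^2$ in the exponent. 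This is essentially the proof one finds in the cited sources, so the approach is not genuinely different from what the paper relies on --- the paper simply omits it.
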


For enumeration, we need the following folklore claim.

\begin{lemma}\label{lem:enumeration}
Let $U$ be a universe of size $n$ and let $\mathcal{F}\subseteq 2^U$ be a family of its subsets that is closed under taking subsets ($X\subseteq Y$ and $Y\in \mathcal{F}$ implies $X\in \mathcal{F}$),
and given a set $X$ it can be decided in polynomial time whether $X\in \mathcal{F}$. Then $\mathcal{F}$ can be enumerated in time $|\mathcal{F}|\cdot n^{\Oh(1)}$.
\end{lemma}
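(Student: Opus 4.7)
The plan is to design a depth-first search procedure that outputs each member of $\mathcal{F}$ exactly once, charging a polynomial amount of work per output. Fix an arbitrary ordering $u_1, u_2, \ldots, u_n$ of the universe $U$. I will define a recursive procedure $\textsc{Enum}(X, i)$ with precondition $X \in \mathcal{F}$ and $X \subseteq \{u_1, \ldots, u_{i-1}\}$. The procedure first outputs $X$, then iterates $j$ from $i$ to $n$, tests in polynomial time whether $X \cup \{u_j\} \in \mathcal{F}$, and if so recursively calls $\textsc{Enum}(X \cup \{u_j\}, j+1)$. The main call is $\textsc{Enum}(\emptyset, 1)$; note that $\emptyset \in \mathcal{F}$ by downward closure, assuming $\mathcal{F}$ is nonempty (otherwise we output nothing).

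For completeness, I will show that every $Y \in \mathcal{F}$ is output exactly once. Write $Y = \{u_{j_1}, u_{j_2}, \ldots, u_{j_k}\}$ with $j_1 < j_2 < \cdots < j_k$. By downward closure each prefix $Y_t := \{u_{j_1}, \ldots, u_{j_t}\}$ is a subset of $Y$ and hence lies in $\mathcal{F}$. Therefore the chain of recursive calls $\textsc{Enum}(Y_0, 1) \to \textsc{Enum}(Y_1, j_1+1) \to \cdots \to \textsc{Enum}(Y_k, j_k+1)$ is actually made, and the last of these outputs $Y$. Conversely, the recursion is structured so that $Y$ is produced only along the unique path that adds its elements in increasing index order, so no set is output twice.

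For the running time, I will observe that each recursive invocation is uniquely identified with the set $X$ it outputs, since the second parameter $i$ is determined by $X$ (namely, one plus the index of the largest element of $X$, or $1$ if $X = \emptyset$). Consequently the total number of invocations equals $|\mathcal{F}|$. Each invocation performs at most $n$ membership queries to $\mathcal{F}$, each running in polynomial time by assumption, and the rest of the bookkeeping is polynomial. Summing over all invocations yields the bound $|\mathcal{F}| \cdot n^{\Oh(1)}$ claimed in the lemma. The only subtle point is the one-to-one correspondence between invocations and outputs, which rests crucially on downward closure (to guarantee that every prefix encountered on the way to $Y$ is itself in $\mathcal{F}$, so the pruning never severs the path), and on the convention of adding elements in strictly increasing index order (to prevent multiple recursion paths to the same output); I expect these to be the main things to verify carefully in the write-up.
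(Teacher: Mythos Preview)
Your proposal is correct and takes essentially the same approach as the paper: both fix an ordering of the universe and perform a depth-first enumeration that adds elements in increasing index order, relying on downward closure to guarantee that the path to every $Y\in\mathcal{F}$ is never pruned. Your write-up is in fact more explicit than the paper's (which only sketches correctness and the polynomial-delay claim), and your charging argument---one invocation per output set, each doing at most $n$ membership tests---is exactly the right way to justify the $|\mathcal{F}|\cdot n^{\Oh(1)}$ bound.
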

\begin{proof}
Order the elements of $U$ arbitrarily as $e_1,e_2,\ldots,e_n$, and process them in this order while keeping some set $X\in \mathcal{F}$, initially set to be the empty set.
When considering the next $e_i$, check if $X\cup \{e_i\}\in \mathcal{F}$. If this is not the case, just proceed further with $X$ kept.
Otherwise, output $X\cup \{e_i\}$ as the next discovered set from $\mathcal{F}$, and execute two subprocedures: in the first proceed with $X$, and in the second proceed with $X\cup \{e_i\}$.
It can be easily seen that every set of $\mathcal{F}$ is discovered by the procedure, and that some new set of $\mathcal{F}$ is always discovered within a polynomial number of steps (i.e., this is
a polynomial-delay enumeration algorithm). Thus, the total running time is $|\mathcal{F}|\cdot n^{\Oh(1)}$.
\end{proof}

\section{Main case distinction}\label{sec:spanning}
The first step in our proof is to try to find a spanning subgraph of the considered graph $G$, which has constant maximum degree, 
but where only a small fraction of vertices have really small degrees. This is done by performing a greedy construction procedure.
Obviously, such a spanning subgraph may not exist, but then we argue that the procedure uncovers some other structure in the graph, which may be exploited by other means.
The form of the output of the greedy procedure constitutes the main case distinction in our proof.

\begin{lemma}\label{lem:subgraphOrPartition}
There is an algorithm that given as input a graph $G$, together with integers $\ell$ and $h$ such that $1 \leq \ell \leq h$, and a real $\delta$ with $0 \leq \delta \leq 1$, runs in polynomial time and outputs one of the following two objects:
\begin{enumerate}
\item\label{case:subgraph} A subgraph $G'$ of $G$ with $V(G') = V(G)$, such that 
\begin{itemize}
\item every vertex in $G'$ has degree at most~$h$, and 
\item less than $\delta \cdot n$ vertices in $G'$ have degree less than $\ell$.
\end{itemize}
\item\label{case:partition} A partition of $V(G)$ into subsets $L$, $H$ and $R$ such that 
\begin{itemize}
\item $|L| \geq \delta \cdot n$, 
\item every vertex in $L$ has strictly less than $\ell$ neighbors outside $H$, and 
\item $|H| \leq \frac{2\ell}{h} \cdot n$.
\end{itemize}
\end{enumerate}
\end{lemma}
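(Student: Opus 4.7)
The plan is to run a natural greedy edge-addition procedure and argue that its output either directly meets the requirements of case~\ref{case:subgraph}, or else exposes a partition satisfying case~\ref{case:partition}. Initialize $G'$ as the spanning subgraph $(V(G),\emptyset)$. Repeatedly look for an edge $uv \in E(G) \setminus E(G')$ such that $d_{G'}(v) < \ell$ and $d_{G'}(u) < h$; when such a pair is found, add $uv$ to $G'$ and continue. The procedure terminates when no such pair exists. Each iteration strictly increases $|E(G')|$ and can be implemented by a brute-force scan, so polynomial running time will follow once we bound the total number of iterations (which will fall out of the charging argument below).

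After termination, set $L = \{v : d_{G'}(v) < \ell\}$ and $H = \{v : d_{G'}(v) = h\}$, and let $R = V(G) \setminus (L \cup H)$. Since the procedure never raises any vertex's degree above $h$, the subgraph $G'$ has maximum degree at most $h$. If $|L| < \delta n$, output $G'$ as a witness for case~\ref{case:subgraph}. Otherwise, output the partition $(L, H, R)$; I have to verify the three bullets of case~\ref{case:partition}. The inequality $|L| \geq \delta n$ is exactly the case split. For the second bullet, fix $v \in L$: termination guarantees that every $G$-neighbor $u$ of $v$ with $uv \notin E(G')$ must satisfy $d_{G'}(u) = h$, i.e., $u \in H$. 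Consequently $N_G(v) \setminus H \subseteq N_{G'}(v)$, which gives $|N_G(v) \setminus H| \leq d_{G'}(v) < \ell$.

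The only nontrivial point is bounding $|H| \leq (2\ell/h)\cdot n$, which I plan to handle by a charging argument. Whenever the procedure adds an edge $uv$, distinguish (at the moment of addition) the endpoint $v$ for which $d_{G'}(v) < \ell$ is guaranteed by the selection rule, and charge that edge to $v$. The degree of $v$ in $G'$ is non-decreasing in time, and edges are charged to $v$ only while $d_{G'}(v) < \ell$, so $v$ collects at most $\ell$ charges throughout the execution. Summing over vertices yields $|E(G')| \leq \ell n$, which simultaneously bounds the iteration count (and therefore the running time). By the handshake identity, $h \cdot |H| \leq \sum_{v \in V(G)} d_{G'}(v) = 2|E(G')| \leq 2\ell n$, giving $|H| \leq (2\ell/h)\cdot n$ as required. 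I do not anticipate any serious obstacle; the only subtlety is that the edge must be charged to the distinguished deficient endpoint chosen by the selection rule rather than to an arbitrary endpoint below $\ell$, as this is precisely what keeps the bound of $\ell$ charges per vertex sound.
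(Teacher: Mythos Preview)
Your proposal is correct and follows essentially the same greedy edge-addition procedure as the paper's proof, with the same definitions of $L$, $H$, $R$ and the same verification of the three bullets. The only cosmetic difference is that the paper bounds $|E(G')| \leq \ell n$ via the potential $\phi(G') = \sum_v \max(\ell - d_{G'}(v),0)$ (initially $\ell n$, dropping by at least one per added edge), whereas you phrase the same count as a charging argument to the deficient endpoint; these are equivalent formulations of the same idea.
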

\begin{proof}
The algorithm takes as input $\ell$, $h$ and $\delta$ and computes a subgraph $G'$ of $G$ as follows. Initially $V(G') = V(G)$ and $E(G') = \emptyset$. As long as there is an edge $uv \in E(G) \setminus E(G')$ such that (a) both $u$ and $v$ have degree strictly less than $h$ in $G'$, and (b) at least one of $u$ and $v$ has degree strictly less than $\ell$ in $G'$, the algorithm adds the edge $uv$ to $E(G')$. When the algorithm terminates, $G'$ is a subgraph of $G$ with $V(G') = V(G)$, such that every vertex in~$G'$ has degree at most $h$. Let $L$ be the set of vertices that have degree {\em strictly less} than $\ell$ in~$G'$. If $|L| < \delta \cdot n$ then the algorithm outputs $G'$, as $G'$ satisfies the conditions of case~\ref{case:subgraph}.

Suppose now that $|L| \geq \delta \cdot n$. Let $H$ be the set of vertices of degree exactly $h$ in $G'$, and let $R$ be $V(G) \setminus (L \cup H)$. Clearly $L$, $H$, and $R$ form a partition of $V(G)$. Consider any vertex $u \in L$. There can not exist an edge $uv \in E(G) \setminus E(G')$ with $v \notin H$, since such an edge would be added to $E(G')$ by the algorithm. Thus every vertex $v \in N_G(u) \setminus H$ is also a neighbor of $u$ in $G'$. Since the degree of $u$ in $G'$ is less than $\ell$, we conclude that $|N_G(u) \setminus H| < \ell$.

Finally, we show that $|H| \leq \frac{2\ell}{h} \cdot n$. To that end, we first upper bound $|E(G')|$. Consider the potential function 
\begin{equation*}
\phi(G') = \sum_{v \in V(G')} \max(\ell - d_{G'}(v), 0).
\end{equation*}
Initially the potential function has value $n\ell$. Each time an edge is added to $G'$ by the algorithm, the potential function decreases by (at least) $1$, because at least one endpoint of the added edge has degree less than $\ell$. Further, when the potential function is $0$, there are no vertices of degree less than $\ell$, and so the algorithm terminates. Thus, the algorithm terminates after at most $n\ell$ iterations, yielding $|E(G')| \leq n\ell$. Hence, the sum of the degrees of all vertices in $G'$ is at most $2n\ell$. Since every vertex in $H$ has degree $h$, it follows that~$|H| \leq \frac{2\ell}{h} \cdot n$.
\end{proof}

To prove Theorem~\ref{thm:main}, we apply Lemma~\ref{lem:subgraphOrPartition} with $\ell = 14$, $h = 3 \cdot 10^5$ and $\delta = \frac{1}{60}$. 
There are two possible outcomes. In the first case we obtain a subgraph $G'$ of $G$ with $V(G') = V(G)$, such that every vertex in $G'$ has degree at most $3 \cdot 10^5$, 
and at most $\frac{1}{60} \cdot n$ vertices in $G'$ have degree less than $14$. We handle this case using the following Lemma~\ref{lem:mainProbabilisticCase}, proved in Section~\ref{sec:probabilisticBound}.
\begin{lemma}\label{lem:mainProbabilisticCase}
Let $G$ be a graph on $n$ vertices that has a subgraph $G'$ with $V(G')=V(G)$ and the following properties: every vertex in $G'$ has degree at most $3 \cdot 10^5$, and less than $\frac{1}{60} \cdot n$ vertices in $G'$ have degree less than $14$. Then $G$ has at most $\Oh(2^{n \cdot \left(1 - 10^{-26} \right)})$ minimal connected dominating sets. Further, there is an algorithm that given as input $G$ and $G'$, enumerates the family of all minimal connected dominating sets of $G$ in time $2^{n \cdot \left(1 - 10^{-26} \right)}\cdot n^{\Oh(1)}$. 
\end{lemma}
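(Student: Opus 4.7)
The plan has two main steps, matching the intuition sketched in Section~1, followed by a derandomisation step for the enumeration claim.

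First (the structural step), I would establish that for every minimal connected dominating set $S$ of $G$, the set $X_S := \{v \in V(G) : |N_G(v) \cap S| \leq 2\}$ satisfies $|X_S| \geq \delta' n$ for some absolute constant $\delta' > 0$ (concretely, $\delta' = 1/20$ is a convenient choice). By the minimality of $S$, each $u \in S$ either owns a private neighbour in $V(G) \setminus S$ or is a cutvertex of $G[S]$; let $D$ be the vertices owning a private neighbour and put $C := S \setminus D$. Each $u \in D$ yields a distinct $p(u) \in V \setminus S$ with $|N(p(u)) \cap S| = 1$, contributing $|D|$ elements to $X_S$. For the contribution of $C$, a block-cut tree analysis of $G[S]$ shows that a cutvertex of large $G[S]$-degree forces many blocks incident to it, and every leaf block in the block-cut tree must contain at least one non-cutvertex — which then belongs to $D$. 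Consequently, when $|D|$ is small, most vertices of $C$ must have $G[S]$-degree at most~$2$, and summing the two contributions gives $|X_S| \geq \delta' n$ in either regime. Since $G' \subseteq G$ implies $N_{G'}(v) \subseteq N_G(v)$, the same set $X_S$ witnesses the inequality for $G'$ too.

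Second (the probabilistic step), it suffices to bound the number of subsets $S \subseteq V(G)$ with $|\{v : |N_{G'}(v) \cap S| \leq 2\}| \geq \delta' n$. Sample $S$ by placing each vertex in $S$ independently with probability $1/2$; then this count equals $2^n$ times the probability of the bad event. The indicators $Y_v := \mathbf{1}[|N_{G'}(v) \cap S| \leq 2]$ are independent across any family of vertices with pairwise disjoint $G'$-neighbourhoods, so I would greedily properly colour the square graph $(G')^2$, whose maximum degree is at most $h^2$, with $k := h^2 + 1 \approx 9 \cdot 10^{10}$ colours. Inside each colour class $V_i$ the $Y_v$'s are mutually independent. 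For every vertex $v$ with $\deg_{G'}(v) \geq 14$, $|N_{G'}(v) \cap S|$ is $\mathrm{Bin}(\deg_{G'}(v), 1/2)$ with mean at least $7$, so a Chernoff estimate yields $\mathbb{E}[Y_v] \leq \beta$ for some absolute $\beta \ll \delta'$; the at most $n/60$ exceptional low-degree vertices contribute at most $n/60$ trivially. Applying Theorem~\ref{prop:hoeff} inside each colour class to the restricted sum $\sum_{v \in V_i,\,\deg_{G'}(v) \geq 14} Y_v$, and union-bounding over the $k$ colour classes, I obtain
$$\Pr\!\left[\sum_{v \in V(G)} Y_v \geq \delta' n\right] \;\leq\; k \cdot \exp\!\left(-\Omega\!\left(\frac{n}{k^2}\right)\right),$$
so the total count of minimal connected dominating sets is at most $2^{n \cdot (1 - \Omega(1/k^2))}$. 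With $h = 3 \cdot 10^5$ the resulting exponent comfortably beats the target $1 - 10^{-26}$.

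For the enumeration statement, I would derandomise the probabilistic argument via pessimistic estimators in the standard fashion: process the colour classes $V_1, \ldots, V_k$ in order, branch on the restriction $S \cap V_i$ in each class, and prune any branch whose pessimistic estimator already forbids reaching $|\{v : |N_{G'}(v) \cap S| \leq 2\}| \geq \delta' n$. The number of surviving leaves is, up to polynomial factors, the combinatorial bound from the previous paragraph, and each surviving set can be tested in polynomial time for being a minimal connected dominating set, yielding the claimed $2^{n(1 - 10^{-26})} \cdot n^{\Oh(1)}$ running time.

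The main obstacle is the structural claim. The slogan ``most cutvertices have small $G[S]$-degree'' is false for individual cutvertices, since 2-connected blocks can harbour cutvertices of arbitrary degree; what actually works is a global accounting through the block-cut tree of $G[S]$ which shows that every high-degree cutvertex of $G[S]$ must be charged against non-cutvertices in incident blocks, which in turn lie in $D$. A secondary source of tension is the interplay between $\ell$, which drives $\beta$ down via the Chernoff bound, and $h$, which drives $k$ up and thereby shrinks the final exponent through the $1/k^2$ factor; the specific choices $\ell = 14$ and $h = 3 \cdot 10^5$ together with $\delta' \approx 1/20$ have to be simultaneously compatible with the $10^{-26}$ budget, but once the structural claim is in hand this is a routine calibration.
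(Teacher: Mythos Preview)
Your probabilistic step is essentially the paper's Lemma~\ref{lem:mainProbabilistic}: colour $(G')^2$ with $h^2$ colours, apply Hoeffding inside each class, union bound. That part is fine and matches the paper.

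The genuine gap is in the structural step. Your claim that $|X_S|\ge \delta' n$ holds for \emph{every} minimal connected dominating set $S$ is false; the argument you sketch only yields $|X_S|\ge \Omega(|S|)$, not $\Omega(n)$. Concretely, let $S=\{s_1,\ldots,s_k\}$ be a path with $k=100$, attach private neighbours $p_1,p_k$ to the endpoints, let $W$ be the remaining $n-k-2$ vertices, make each $w\in W$ adjacent to $s_1,s_2,s_3$, and put a $14$-regular graph on $W$. Taking $G'$ to be the $14$-regular graph on $W$ together with isolated vertices on $S\cup\{p_1,p_k\}$ satisfies the hypothesis of the lemma once $n>6120$. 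Then $S$ is a minimal connected dominating set, but every $w\in W$ has three neighbours in $S$, so $|X_S|=k+2=102$, which is far below $n/20$. Your dichotomy ``$|D|$ large vs.\ $|D|$ small'' breaks here because $|D|\le|S|$ and the cutvertex contribution is also bounded by $|S|$. The paper sidesteps this by splitting on $|S|$: sets with $|S|<\tfrac{4}{10}n$ are counted directly via $\sum_{i\le 4n/10}\binom{n}{i}\le 2^{0.99n}$, and only for $|S|\ge \tfrac{4}{10}n$ is the structural inequality (Lemma~\ref{lem:LSlowerbound}) invoked. You need the same split.

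A secondary point: your enumeration via ``pessimistic estimators, processing colour classes and pruning'' is vague, and pessimistic estimators are a tool for locating \emph{one} good object rather than enumerating a family. The paper's route is much cleaner: the family $\{S: |L(S)|\ge \delta' n\}$ is downward closed (since $S'\subseteq S$ implies $L(S')\supseteq L(S)$), so Lemma~\ref{lem:enumeration} enumerates it with polynomial delay, directly matching the combinatorial bound. Finally, your block--cut tree sketch is plausible but the paper instead uses a spanning-tree argument (Lemma~\ref{lem:cutAndCount}) to show that if a $(1-\alpha)$ fraction of $G[S]$ are cutvertices then a $(1-7\alpha)$ fraction have degree exactly~$2$; this is more direct than charging high-degree cutvertices to leaf blocks.
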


In the second case we obtain a partition of $V(G)$ into $L$, $H$, and $R$ such that $|L| \geq \frac{1}{60} \cdot n$, every vertex in $L$ has strictly less than $14$ neighbors outside $H$, and $|H| \leq \frac{3}{10^4} \cdot n$.  This case is handled by the following Lemma~\ref{lem:mainStupid}, which we prove in Section~\ref{sec:stupidBranching}.
\begin{lemma}\label{lem:mainStupid}
Let $G$ be a graph on $n$ vertices that has a partition of $V(G)$ into $L$, $H$ and $R$ such that $|L| \geq \frac{1}{60} \cdot n$, every vertex in $L$ has strictly less than $14$ neighbors outside $H$, and $|H| \leq \frac{1}{10^4} \cdot n$. Then $G$ has at most $2^{n \cdot \left(1 - 10^{-50} \right)}$ minimal connected dominating sets. Further, there is an algorithm that given as input $G$ together with the partition $(L,H,R)$, enumerates the family of all minimal connected dominating sets of $G$ in time $2^{n \cdot \left(1 - 10^{-50} \right)}\cdot n^{\Oh(1)}$. 
\end{lemma}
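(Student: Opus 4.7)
The plan is to enumerate minimal connected dominating sets by first branching on $S_H := S \cap H$, contributing a factor $2^{|H|} \leq 2^{n/10^4}$, and then, for each fixed $S_H$, bounding the number of completions $S \setminus H$ by exploiting the set $L_0(S_H) := \{v \in L : N_H(v) \cap S_H = \emptyset\}$ of $L$-vertices not dominated by $S_H$. For every $v \in L_0(S_H)$, a compatible minimal connected dominating set $S$ must satisfy $S \cap N_{G-H}[v] \neq \emptyset$, because $v$'s $H$-neighbors are disjoint from $S_H$, so $v$ must be dominated via one of its at most $13$ non-$H$-neighbors or by itself. Using the hypothesis $|N_{G-H}(v)| < 14$, I would greedily extract a subfamily $I(S_H) \subseteq L_0(S_H)$ whose closed $G-H$-neighborhoods are pairwise disjoint, with $|I(S_H)|$ of order $|L_0(S_H)|$ obtained by a careful $2$-neighborhood-greedy argument; the disjointness makes the domination constraints at the vertices of $I(S_H)$ independent under a uniform completion, yielding at most $2^{n-|H|}(1-2^{-14})^{|I(S_H)|}$ valid completions and a total count bounded by $2^{n-|H|}\sum_{S_H}(1-2^{-14})^{|I(S_H)|}$.

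The remainder of the argument is a case analysis to bound this sum. Set $L^* := \{v \in L : N_H(v) = \emptyset\}$. In the benign case $|L^*| \geq |L|/2$, the inclusion $L^* \subseteq L_0(S_H)$ holds for every $S_H$, hence $|I(S_H)| = \Omega(n)$ uniformly, the sum collapses to $2^{|H|}(1-2^{-14})^{\Omega(n)}$, and the resulting bound $2^{n(1-\Omega(10^{-9}))}$ is comfortably below the target $2^{n(1-10^{-50})}$. This is the easy regime.

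The main obstacle is the complementary regime $|L^*| < |L|/2$, where $L_0(S_H)$ can collapse to almost nothing for specific $S_H$ (for instance $S_H = H$ dominates every $v \in L \setminus L^*$), so the domination-only bound gives no saving. Here I would augment the domination constraint with the minimality constraint: for each $v \in L \cap S$ in a minimal CDS, essentiality forces either a private neighbor $u \in V \setminus S$ with $N[u] \cap S = \{v\}$, or that $v$ be a cut vertex of $G[S]$. Vertices with private neighbors in $H$ inject into $H$ and thus contribute at most $|H| \leq n/10^4$ of them; vertices with private neighbors in $V \setminus H$ give additional localized domination-style constraints inside the bounded-degree $L$-structure; and the remaining cut-vertex-essential vertices force $G[S]$ to be tree-like enough that most vertices of $L \cap S$ have small $G[S]$-degree. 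Stratifying $L$ further by $|N_H(v)|$ and combining these essentiality savings with a per-slice refinement of the $S_H$-branching gives the required saving; the numerous case distinctions and the small losses accumulating in each of them are exactly what force the final exponent saving down to $10^{-50}$. Matching the combinatorial count algorithmically follows by implementing the branching above and invoking Lemma~\ref{lem:enumeration} to get polynomial-delay enumeration in the same running time.
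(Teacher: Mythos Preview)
Your proposal has a genuine gap already in the ``benign case''. You claim that a $2$-neighbourhood greedy argument yields $|I(S_H)| = \Omega(|L_0(S_H)|)$ with the closed $(G-H)$-neighbourhoods $N_{G-H}[v]$ for $v \in I(S_H)$ pairwise disjoint. That argument needs every vertex appearing in some $N_{G-H}[v]$ to itself lie in only boundedly many such neighbourhoods. This is true for $u \in L$ (since $u$ has fewer than $14$ neighbours outside $H$), but it is \emph{false} for $u \in R$: nothing in the hypotheses bounds $|N_G(u)\cap L|$ for $u \in R$. A single vertex $r\in R$ adjacent to all of $L$ forces $r\in N_{G-H}[v]$ for every $v\in L_0(S_H)$, so any pairwise-disjoint subfamily has size $1$ and your bound $2^{\,n-|H|}(1-2^{-14})^{|I(S_H)|}$ degenerates to $2^{\,n-|H|}$. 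The same obstruction contaminates the ``minimality'' arguments you sketch in the hard case: private neighbours of $L$-vertices may again lie in $R$ and be shared arbitrarily, so the ``localized'' constraints are not independent. Beyond this, the hard-case paragraph is only a plan (stratify, combine essentiality and connectivity savings) with no actual estimate; it does not constitute a proof.

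The paper avoids exactly this obstruction by branching on \emph{all} of $H\cup R$, not only on $H$: it iterates over the $2^{\,n-|L|}$ partitions $(I,O)$ of $H\cup R$ and bounds, for each, the number of minimal $(I,O)$-extensions $S\subseteq L$. After passing to an independent $L'\subseteq L$ and contracting/deleting inside $H\cup R$ to make $(L,H,R)$ an \emph{excellent} partition (so $G[H\cup R]$ is edgeless), the paper splits on whether many vertices of $R$ have degree $<10\ell$. If few do, a counting argument gives $|H\cup R|\le \tfrac{3}{10}|L|$, and a tree-leaf argument shows every minimal extension has size at most $|H\cup R|$, whence a binomial bound. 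If many do, one greedily picks $R'\subseteq R$ with pairwise disjoint neighbourhoods in $L$ --- this works precisely because each $v\in L$ has $<\ell$ neighbours in $R$, the bound you are missing in the other direction --- and every extension must hit each $N(r)$, giving the $(1-2^{-10\ell})^{|R'|}$ saving. In short, the paper controls the $R$-to-$L$ degree indirectly through the case split; your approach tries to work on the $L$ side without ever controlling the $R$ side, and that is where it breaks.
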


Together, Lemmas~\ref{lem:mainProbabilisticCase} and~\ref{lem:mainStupid} complete the proof of Theorem~\ref{thm:main}.

\section{Robustly dense graphs}\label{sec:probabilisticBound}
In this section we bound the number of minimal connected dominating sets in a graph $G$ that satisfies case~\ref{case:subgraph} of Lemma~\ref{lem:subgraphOrPartition}, that is, we prove Lemma~\ref{lem:mainProbabilisticCase}. In particular, we assume that $G$ has a subgraph $G'$ such that all vertices of $G'$ have degree at most $h=3 \cdot 10^5$, and less than $\delta n=\frac{1}{60} n$ vertices of $G'$ have degree less than $\ell=14$. For a set $S$, we say that a vertex $v$ has {\em low $S$-degree} if $d_S(v) \leq 2$. We define the set $L(S) = \{v \in V(G) ~:~ d_S(v) \leq 2\}$ to be the set of vertices in $G$ of low $S$-degree.  Our bound consists of two main parts. In the first part we give an upper bound on the number of sets $S$ in $G$ such that $|L(S)| \geq \frac{1}{20} \cdot n$. In the second part we show that for any minimal connected dominating set $S$ of $G$ of size at least $\frac{4}{10}n$, we have $|L(S)| \geq \frac{1}{20} \cdot n$. Together the two parts immediately yield an upper bound on the 
number of (and an enumeration algorithm for) minimal connected dominating sets in $G$. We begin by proving the first part using a probabilistic argument.

\begin{lemma}\label{lem:mainProbabilistic}
Let $H$ be a graph on $n$ vertices of maximum degree at most $h$, such that at most $\frac{1}{60} \cdot n$ vertices have degree less than $\ell\geq 14$. Then there are at most $h^2 \cdot 2^n \cdot e^{- \frac{n}{1800 h^4}}$ subsets $S$ of $V(H)$ such that $|L(S)| \geq \frac{1}{20} \cdot n$. 
\end{lemma}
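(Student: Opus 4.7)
The plan is to argue probabilistically. Sample $S \subseteq V(H)$ uniformly by including each vertex independently with probability $1/2$, so that the count of interest equals $2^n \cdot \Pr[|L(S)| \geq n/20]$; it then suffices to show $\Pr[|L(S)| \geq n/20] \leq h^2\, e^{-n/(1800\, h^4)}$. For $v \in V(H)$ set $X_v := \mathbf{1}[d_S(v) \leq 2]$, so that $|L(S)| = \sum_v X_v$. The main obstacle is that these indicators are not independent---but $X_v$ depends on $S$ only through $S \cap N_H(v)$, so $X_v$ and $X_w$ are independent whenever $N_H(v) \cap N_H(w) = \emptyset$.

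First I would construct the \emph{dependency graph} $D$ on $V(H)$, with an edge $\{v,w\}$ exactly when $v \neq w$ and $N_H(v) \cap N_H(w) \neq \emptyset$. For each $v$, the $D$-neighborhood is contained in $\bigcup_{u \in N_H(v)} N_H(u) \setminus \{v\}$, which has size at most $h \cdot h - 1 = h^2 - 1$. A greedy coloring thus produces a proper coloring $\phi$ of $D$ with $K \leq h^2$ color classes $V_1,\ldots,V_K$, and within each $V_i$ the variables $\{X_v\}_{v \in V_i}$ are mutually independent.

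Next, distinguish $V^g := \{v \in V(H) : d_H(v) \geq 14\}$ from its complement, whose size is at most $n/60$ by hypothesis. For $v \in V^g$, the probability $\Pr[X_v=1] = \Pr[\operatorname{Bin}(d_H(v), 1/2) \leq 2]$ is maximized at $d_H(v) = 14$, yielding $\Pr[X_v=1] \leq (1 + 14 + 91)/2^{14} = 106/16384 < 1/150 =: p$. Writing $m_i := |V_i \cap V^g|$ and $Y_i := \sum_{v \in V_i \cap V^g} X_v$, we have $E[Y_i] \leq p m_i$ and $\sum_i E[Y_i] \leq pn < n/150$. Since $L(S)$ contains at most $n/60$ vertices outside $V^g$, the event $|L(S)| \geq n/20$ forces $\sum_i Y_i \geq n/20 - n/60 = n/30$, and hence
\[
\sum_{i=1}^{K} \bigl(Y_i - E[Y_i]\bigr) \;\geq\; \tfrac{n}{30} - \tfrac{n}{150} \;=\; \tfrac{2n}{75}.
\]
By pigeonhole some color $i^*$ satisfies $Y_{i^*} - E[Y_{i^*}] \geq \tfrac{2n}{75K} \geq \tfrac{2n}{75\,h^2}$.

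Finally, applying Hoeffding's bound (Theorem~\ref{prop:hoeff}) to the independent $[0,1]$-valued summands of $Y_{i^*}$ gives
\[
\Pr\!\Bigl[Y_{i^*} - E[Y_{i^*}] \geq \tfrac{2n}{75\, h^2}\Bigr] \;\leq\; \exp\!\left(-\frac{2\,(2n/(75\, h^2))^2}{m_{i^*}}\right) \;\leq\; \exp\!\left(-\frac{8n}{5625\, h^4}\right) \;\leq\; \exp\!\left(-\frac{n}{1800\, h^4}\right),
\]
using $m_{i^*} \leq n$ and $8/5625 > 1/1800$. A union bound over the $K \leq h^2$ choices of $i^*$ then yields $\Pr[|L(S)| \geq n/20] \leq h^2\, e^{-n/(1800\, h^4)}$, and multiplying by $2^n$ completes the proof. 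The delicate part is calibrating the constants so that everything fits: the coloring step costs an $h^2$ factor both in the union bound and in the effective Hoeffding threshold, and one must choose $\ell = 14$ large enough that the per-vertex failure probability $p$ stays comfortably below the target density $1/30$, leaving a deviation of order $n$ for Hoeffding to control.
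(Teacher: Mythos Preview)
Your proof is correct and follows essentially the same approach as the paper: recast the count as a probability under uniform $S$, color the square/dependency graph with at most $h^2$ colors so that the indicators $X_v$ are independent within each class, apply Hoeffding per class, and take a union bound. The only cosmetic differences are that the paper colors $H^2$ rather than the (possibly sparser) dependency graph $D$, and it sets a per-class threshold $\frac{\ell^2}{2^\ell}|V_i^Q|+\frac{n}{60h^2}$ and then sums, whereas you pass through a global deviation and pigeonhole to a single class; the resulting Hoeffding exponent and final bound coincide.
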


\begin{proof}
To prove the lemma, it is sufficient to show that if $S \subseteq V(H)$ is selected uniformly at random, then the probability that $|L(S)|$ is at least $\frac{1}{20} \cdot n$ is upper bounded as follows.

\begin{align}\label{eqn:mainProb}
\Pr\left[|L(S)| > \frac{1}{20} \cdot n\right] \leq h^2 \cdot \exp\left(- \frac{n}{1800 h^4}\right)
\end{align}

Let $H^2$ be the graph constructed from $H$ by adding an edge between every pair of vertices in $H$ that share a common neighbor. Since $H$ has maximum degree at most~$h$, $H^2$ has maximum degree at most $h(h-1)\leq h^2-1$, and therefore $H^2$ can be properly colored with $h^2$ colors~\cite{Diestelbook}. Let $\phi \colon V(H) \to \{1, \ldots, h^2\}$ be a proper coloring of $H^2$, and let $V_1, V_2, \ldots, V_{h^2}$ be the color classes of $\phi$. Two vertices in the same color class of $\phi$ have empty intersection of neighborhoods in $H$. Thus, when $S \subseteq V(H)$ is picked at random, we have that $d_S(u)$ and $d_S(v)$ are independent random variables whenever $u$ and $v$ are in the same color class of $\phi$. 

Let $Q$ be the set of vertices in $G$ of degree at least $\ell$. We have that $|Q| \geq (1 - \frac{1}{60}) \cdot n$ by assumption. For each $i \leq h^2$ we set $V_i^Q = V_i \cap Q$. Next we upper bound, for each $i \leq h^2$, the probability that $|L(S) \cap V_i^Q| > \frac{1}{40h^2}\cdot n$. 
For every vertex $v \in V(H)$, define the indicator variable $X_v$ which is set to $1$ if $d_S(v) \leq 2$ and $X_v$ is set to $0$ otherwise. We have that 
\begin{align*}
\Pr[X_v = 1] = \frac{{d(v) \choose 0} + {d(v) \choose 1} + {d(v) \choose 2}}{2^{d(v)}}.
\end{align*}
The right hand side is non-increasing with increasing $d(v)$, so since we assumed $d(v) \geq \ell$, we have that $\Pr[X_v = 1] \leq \frac{{\ell \choose 0} + {\ell \choose 1} + {\ell \choose 2}}{2^{\ell}}\leq \frac{\ell^2}{2^{\ell}}$. In particular this holds for every $v \in Q$. Thus, for every $i \leq h^2$ we have that $|L(S) \cap V_i^Q| = \sum_{v \in V_i^Q} X_v$ --- that is, $|L(S) \cap V_i^Q|$ is a sum of $|V_i^Q|$ independent indicator variables, each taking value $1$ with probability at most $\frac{\ell^2}{2^{\ell}}$. Thus, Hoeffding's inequality (Theorem~\ref{prop:hoeff}) yields
\begin{align}\label{eqn:colorClassProb}
\Pr\left[|L(S) \cap V_i^Q| \geq \frac{\ell^2}{2^{\ell}} \cdot |V_i^Q| + \frac{n}{60h^2}\right] \leq \exp\left(- \frac{2n^2}{3600h^4 |V_i^Q|}\right) \leq \exp\left(- \frac{n}{1800 h^4}\right).
\end{align}
The union bound over the $h^2$ color classes of $\phi$, coupled with equation~\eqref{eqn:colorClassProb}, yields that 
\begin{align*}
\Pr\left[|L(S) \cap Q| \geq \frac{\ell^2}{2^{\ell}} |Q| + \frac{1}{60} \cdot n\right] \leq h^2 \cdot \exp\left(- \frac{n}{1800 h^4}\right).
\end{align*}
Hence, with probability at least $1 - h^2 \cdot \exp\left(- \frac{n}{1800 h^4}\right)$ we have that 
\begin{align*}
|L(S) \cap Q| \leq \frac{\ell^2}{2^{\ell}} |Q| + \frac{1}{60} \cdot n \leq \frac{2}{60} \cdot n,
\end{align*}
where the last inequality holds due to $\ell \geq 14$. Since $|L(S)| \leq |L(S) \cap Q| + |V(H) \setminus Q|$ and $|V(H) \setminus Q| \leq \frac{1}{60}n$ it follows that in this case, $|L(S)| \leq \frac{1}{20} \cdot n$. This proves equation~\eqref{eqn:mainProb} and the statement of the Lemma.
\end{proof}

Note that the statement of Lemma~\ref{lem:mainProbabilistic} requires $H$ to have at maximum degree at most $h$, such that at most $\frac{1}{60} \cdot n$ vertices have degree less than $\ell$. What we obtain from Lemma~\ref{lem:subgraphOrPartition} is a subgraph $G'$ of the input graph $G$ with these properties. We will apply Lemma~\ref{lem:mainProbabilistic} to $H=G'$ and transfer the conclusion to $G$, since $G'$ is a subgraph of $G$.

We now turn to proving the second part, that for any minimal connected dominating set $S$ of $G$ of size at least $\frac{4}{10}n$, we have $|L(S)| \geq \frac{1}{20} \cdot n$. The first step of the proof is to show that any graph where almost every vertex is a cut vertex must have many vertices of degree $2$.

\begin{lemma}\label{lem:cutAndCount}
Let $\alpha>0$ be a constant. Suppose that $H$ is a connected graph on $n$ vertices in which at least $(1-\alpha)n$ vertices are cutvertices.
Then at least $(1-7\alpha)n$ vertices of $H$ have degree equal to $2$.
\end{lemma}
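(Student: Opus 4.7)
The plan is to analyze $H$ via its block-cut tree and charge the ``atypical'' vertices (of degree $1$ or $\geq 3$) to the at most $\alpha n$ non-cutvertices. Let $C$ be the set of cutvertices of $H$ and $N=V(H)\setminus C$, so $|N|\leq \alpha n$, and partition $V(H)$ into $V_1,V_2,V_{\geq 3}$ by degree. It suffices to prove $|V_1|+|V_{\geq 3}|\leq 7|N|$, since then $|V_2|=n-|V_1|-|V_{\geq 3}|\geq (1-7\alpha)n$. Assume $|C|\geq 1$ (otherwise the bound is trivial), and let $\mathcal{T}$ denote the block-cut tree of $H$. Every leaf of $H$ is a non-cutvertex, so $V_1\subseteq N$, and since $V_{\geq 3}\cap N\subseteq N$ is disjoint from $V_1$, one has $|V_1|+|V_{\geq 3}\cap N|\leq |N|$. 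The remaining task is to show $|V_{\geq 3}\cap C|\leq 6|N|$.

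For a cutvertex $v$, write $b(v)$ for the number of blocks containing $v$ (equivalently, its degree in $\mathcal{T}$, which is at least $2$), and for a block $B$ write $c_B$ for its degree in $\mathcal{T}$ and $n_B=|V(B)|$. The structural key is that every vertex of a $2$-connected block on $\geq 3$ vertices has degree at least $2$ inside the block; consequently a cutvertex $v$ lies in $V_{\geq 3}$ iff either (a) $b(v)\geq 3$, or (b) $b(v)=2$ and at least one block containing $v$ is $2$-connected. Call these two disjoint sets $X$ and $Y$ and bound each separately.

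Applying the tree identity $\sum_x(d_{\mathcal{T}}(x)-2)=-2$ to $\mathcal{T}$ yields
\[
\ell - 2 \;=\; \sum_{v\in C:\, b(v)\geq 3}(b(v)-2) \;+\; \sum_{B:\, c_B\geq 3}(c_B-2),
\]
where $\ell$ is the number of leaves of $\mathcal{T}$ (necessarily all blocks, since cutvertices have $\mathcal{T}$-degree $\geq 2$). Each leaf block contains at least one non-cutvertex, giving $\ell\leq |N|$, and hence $|X|\leq \ell - 2\leq |N|$. To bound $|Y|$, I use the crude estimate $|Y|\leq \sum_{B\text{ $2$-connected}} c_B$ and classify by $c_B$: $2$-connected blocks with $c_B=1$ each contain $\geq 2$ non-cutvertices, those with $c_B=2$ contain $\geq 1$, and those with $c_B\geq 3$ each contribute at least $1$ to the identity above, so $\sum_{c_B\geq 3} c_B\leq 3(\ell-2)$. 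Pairing these counts against the global budget $\sum_B(n_B-c_B)=|N|$ distributed among the various block types gives $|X|+|Y|\leq 5|N|$, and therefore $|V_1|+|V_{\geq 3}|\leq |N|+5|N|=6|N|\leq 7\alpha n$, as required.

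\textbf{The main obstacle} is the bookkeeping in the last step: the pool of $|N|$ non-cutvertices is simultaneously charged by leaf blocks, by $2$-connected blocks with small $c_B$, and by bridge leaves of $H$, so one needs to couple these constraints with the tree identity via a small linear program. Fortunately the stated constant $7$ leaves considerable slack -- an analogous argument actually yields $|V_1|+|V_{\geq 3}|\leq 5|N|$ -- so the arithmetic need not be tight.
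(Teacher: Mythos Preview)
Your block-cut tree approach is correct and genuinely different from the paper's proof. The paper instead takes an arbitrary spanning tree $T$ of $H$, observes that the set $L_1$ of leaves of $T$ consists of non-cutvertices (so $|L_1|\leq\alpha n$) and that the set $L_3$ of vertices of $T$-degree $\geq 3$ satisfies $|L_3|<|L_1|$, and then argues that any vertex $u$ outside the closed $T$-neighbourhood of $L_1\cup L_3\cup N$ has degree exactly $2$ in $H$: indeed both $T$-neighbours $v_1,v_2$ of such a $u$ have $T$-degree $2$ and are cutvertices, and a putative extra $H$-neighbour $w$ of $u$ would give an edge bypassing whichever of $v_1,v_2$ lies on the $T$-path from $u$ to $w$, contradicting that this $v_i$ is a cutvertex. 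A decomposition of $T$ into paths joining vertices of $L_1\cup L_3$ then bounds that closed neighbourhood by $7\alpha n$ with no further case analysis.

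Your route trades this elementary spanning-tree picture for structural bookkeeping on the block-cut tree, and the payoff is a sharper constant. The final linear program you left unexecuted is easy to close: writing $b_1$ for the number of bridge leaf-blocks and $a_1,a_2$ for the numbers of $2$-connected blocks with $c_B=1,2$, the identity $\sum_B(n_B-c_B)=|N|$ gives $|N|\geq b_1+2a_1+a_2$, while $|X|+|Y|\leq s+a_1+2a_2+3t$ with $s+t=\ell-2=b_1+a_1-2$; maximising over $t$ yields $|X|+|Y|\leq 3b_1+4a_1+2a_2-6\leq 3|N|$, so in fact $|V_1|+|V_{\geq 3}|\leq 4|N|$. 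The paper's argument is more self-contained --- no block decomposition, no LP --- at the cost of the looser constant $7$.
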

\begin{proof}
Let $X$ be the set of those vertices of $H$ that are not cutvertices.
By the assumption we have $|X|\leq \alpha n$.
Let $T$ be any spanning tree in $H$, and let $L_1$ be the set of leaves of $T$.
No leaf of $T$ is a cutvertex of $H$, hence $L_1\subseteq X$.
Let $L_3$ be the set of those vertices of $T$ that have degree at least $3$ in $T$.
It is well-known that in any tree, the number of vertices of degree at least $3$ is smaller than the number of leaves.
Therefore, we have the following:
\begin{equation}\label{eq:l1l3}
|L_3|<|L_1|\leq |X|\leq \alpha n.
\end{equation}

Let $R$ be the closed neighborhood of $L_1\cup L_3\cup X$ in $T$, that is, the set consisting of $L_1\cup L_3\cup X$ and all vertices that have neighbors in $L_1\cup L_3\cup X$.
Since $T$ is a tree, it can be decomposed into a set of paths $\mathcal{P}$, where each path connects two vertices of $L_1\cup L_3$ and all its internal vertices have degree $2$ in $T$.
Contracting each of these paths into a single edge yields a tree on the vertex set $L_1\cup L_3$, which means that the number of the paths in $\mathcal{P}$ is less than $|L_1\cup L_3|$.
Note that the closed neighborhood of $L_1\cup L_3$ in $T$ contains at most $2$ of the internal vertices on each of the paths from $\mathcal{P}$: the first and the last one.
Moreover, each vertex of $X\setminus (L_1\cup L_3)$ introduces at most $3$ vertices to $R$: itself, plus two its neighbors on the path from $\mathcal{P}$ on which it lies.
Consequently, by equation~\eqref{eq:l1l3} we have:
\begin{equation}\label{eq:R}
|R|\leq |L_1|+|L_3|+2|L_1\cup L_3|+3|X|\leq 7\alpha n.
\end{equation}

We now claim that every vertex $u$ that does not belong to $R$, in fact has degree $2$ in~$H$.
By the definition of $R$ we have that $u$ has degree $2$ in $T$, both its neighbors $v_1$ and $v_2$ in~$T$ also have degree $2$ in $T$, and moreover $u$, $v_1$, and $v_2$ are all cutvertices in $H$.
Aiming towards a contradiction, suppose $u$ has some other neighbor $w$ in $H$, different than $v_1$ and $v_2$. 
Then the unique path from $u$ to $w$ in $T$ passes either through $v_1$ or through $v_2$; say, through $v_1$.
However, then the removal of $v_1$ from $H$ would not result in disconnecting $H$.
This is because the removal of $v_1$ from $T$ breaks $T$ into $2$ connected components, as the degree of $v_1$ in $T$ is equal to $2$, 
and these connected components are adjacent in $H$ due to the existence of the edge $uw$. This is a contradiction with the assumption that $v_1$ and $v_2$ are cutvertices.

From equation~\eqref{eq:R} and the claim proved above it follows that at least $(1-7\alpha)n$ vertices of $G$ have degree equal to $2$.
\end{proof}

We now apply Lemma~\ref{lem:cutAndCount} to subgraphs induced by minimal connected dominating sets.

\begin{lemma}\label{lem:LSlowerbound}
Let $S$ be a minimal connected dominating set of a graph $G$ on $n$ vertices, such that $|S| \geq \frac{4}{10}n$. Then $|L(S)| \geq \frac{1}{20}n$.
\end{lemma}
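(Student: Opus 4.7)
The approach is to partition $S$ according to which minimality obstruction keeps each vertex in $S$. Define $D \subseteq S$ to consist of those $u$ that have a \emph{private neighbor}, i.e., a vertex $v \in V(G) \setminus S$ with $N_G(v) \cap S = \{u\}$, and let $C \subseteq S$ consist of the cutvertices of $G[S]$. The first step is to verify the structural identity $S = D \cup C$: if some $u \in S$ belonged to neither set, then $u$ is not a cutvertex of the connected graph $G[S]$, so $G[S \setminus \{u\}]$ is still connected; and $u \notin D$ means that every $v \in V(G) \setminus S$ still has a neighbor in $S \setminus \{u\}$, while $u$ itself is dominated by any of its neighbors in $G[S]$ (which exist because $G[S]$ is connected and $|S|\geq 2$). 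Thus $S \setminus \{u\}$ would be a smaller connected dominating set, contradicting minimality.

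Next I split into two cases depending on $|D|$. If $|D| \geq n/20$, I pick for each $u \in D$ a private neighbor $v_u$; the assignment $u \mapsto v_u$ is injective (different vertices of $D$ have disjoint singleton witnesses), and each $v_u$ satisfies $d_S(v_u) = 1$, so $v_u \in L(S)$. This yields $|L(S)| \geq |D| \geq n/20$ immediately. Otherwise $|D| < n/20$, and therefore $|C| \geq |S| - n/20$. The fraction of non-cutvertices in the connected graph $G[S]$ is then at most $\alpha := n/(20|S|)$, which is at most $1/8$ because $|S| \geq \frac{4}{10}n$. I then apply Lemma~\ref{lem:cutAndCount} to $G[S]$ to conclude that at least $(1-7\alpha)|S| = |S| - \frac{7}{20}n \geq \frac{1}{20}n$ vertices of $G[S]$ have degree exactly $2$ in $G[S]$, and each such vertex lies in $L(S)$.

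I do not expect any serious obstacle. The only non-trivial ingredient is the minimality-based decomposition $S = D \cup C$, which requires nothing more than carefully checking that removing a vertex outside $D \cup C$ preserves both connectivity and domination. The arithmetic matches tightly: the hypothesis $|S| \geq \frac{4}{10}n = \frac{8}{20}n$ and the conclusion $|L(S)| \geq \frac{1}{20}n$ fit together via $\frac{8}{20}n - \frac{7}{20}n = \frac{1}{20}n$ in Case~2, confirming that the threshold $\frac{4}{10}$ in the hypothesis is well-matched to the constant $\frac{1}{20}$ in the conclusion.
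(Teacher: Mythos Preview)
Your proposal is correct and follows essentially the same approach as the paper's proof: both decompose $S$ into cutvertices of $G[S]$ and vertices with private neighbors outside $S$, use the private neighbors to produce elements of $L(S)$, and when the cutvertex set is large invoke Lemma~\ref{lem:cutAndCount} to find many degree-$2$ vertices in $G[S]$. The only cosmetic differences are that the paper argues by contradiction (assuming $|L(S)|<\frac{1}{20}n$) and applies Lemma~\ref{lem:cutAndCount} with the fixed value $\alpha=\frac{1}{8}$, whereas you give a direct case split on $|D|$ and use the sharper $\alpha=n/(20|S|)$; your injectivity argument for $u\mapsto v_u$ is also spelled out more explicitly than in the paper.
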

\begin{proof}
For $n\leq 2$ the claim is trivial, so assume $n\geq 3$, which in particular implies $|S|\geq 2$.
Aiming towards a contradiction, suppose $|L(S)| < \frac{1}{20}n$. By minimality, we have that for every vertex $v$, the set $S \setminus \{v\}$ is not a connected dominating set of $G$. Let
\begin{equation*}
S_{\textrm{cut}} = \{v \in S\colon G[S] - v \mbox{ is disconnected}\}.
\end{equation*}
Consider a vertex $v$ in $S \setminus S_{\textrm{cut}}$. We have that $S \setminus \{v\}$ can not dominate all of $V(G)$ because otherwise $S \setminus \{v\}$ would be a connected dominating set.  Let $u$ be a vertex in of $G$ not dominated by $S\setminus \{v\}$. Because $G[S]$ is connected and $|S|\geq 2$, vertex $v$ has a neighbor in $S$, so in particular $u\neq v$ and hence $u\notin S$. Further, since $S$ is a connected dominating set, $u$ has a neighbor in $S$, and this neighbor can only be $v$. Hence $d_S(u) = 1$ and so $u \in L(S)$. Re-applying this argument for every $v \in S \setminus S_{\textrm{cut}}$ yields $|L(S)| \geq |S \setminus S_{\textrm{cut}}|$. 

From the argument above and the assumption $|L(S)<\frac{1}{20}n$, it follows that $|S \setminus S_{\textrm{cut}}| \leq \frac{1}{20}n$. Since $|S| \geq \frac{4}{10}n$, we have that $|S \setminus S_{\textrm{cut}}| \leq \frac{1}{8}|S|$. It follows that $|S_{\textrm{cut}}| \geq \frac{1}{8}|S|$. By Lemma~\ref{lem:cutAndCount} applied to $G[S]$, the number of degree $2$ vertices in $G[S]$ is at least $(1 - \frac{7}{8})|S| = \frac{1}{8}|S| \geq \frac{1}{20}n$. Each of these vertices belongs to $L(S)$, which yields the desired contradiction.
\end{proof}

We are now in position to wrap up the first case, giving a proof of Lemma~\ref{lem:mainProbabilisticCase}.

\begin{proof}[Proof of Lemma~\ref{lem:mainProbabilisticCase}]
There are at most $\sum_{i=0}^{\lfloor\frac{4n}{10}\rfloor}{n \choose i} \leq 2^{n(1-\frac{1}{100})}$ subsets of $V(G)$ of size at most $\frac{4}{10} \cdot n$. 
Thus, the family of all minimal connected dominating sets of size at most $\frac{4}{10} \cdot n$ can be enumerated in time $2^{n(1-\frac{1}{100})}\cdot n^{\Oh(1)}$ by enumerating all sets of size at most $\frac{4}{10} \cdot n$, and checking for each set in polynomial time whether it is a minimal connected dominating set.

Consider now any minimal connected dominating set $S$ in $G$ with $|S|\geq \frac{4}{10} \cdot n$. By Lemma~\ref{lem:LSlowerbound}, we have that $|L(S)| \geq \frac{1}{20}n$. Since every vertex of degree at most $2$ in $G$ has degree at most $2$ in $G'$, it follows that $|L(S)| \geq \frac{1}{20}n$ holds also in $G'$. However, by Lemma~\ref{lem:mainProbabilistic} applied to $G'$, there are at most $2^n \cdot e^{- \frac{n}{1800 h^4}}$ subsets $S$ of $V(G') = V(G)$ such that $|L(S)| \geq \frac{1}{20} \cdot n$ (in $G'$). Substituting $h = 3 \cdot 10^5$ in the above upper bound yields that there are at most $2^{n \cdot \left(1 -10^{-26} \right)}$ minimal connected dominating sets of size at least $\frac{4}{10}n$, yielding the claimed upper bound on the number of minimal connected dominating sets.

To enumerate all minimal connected dominating sets of $G$ of size at least $\frac{4}{10}n$ in time $2^{n \cdot \left(1 - 10^{-26} \right)}\cdot n^{\Oh(1)}$, it is sufficient to list all sets $S$ such that $|L(S)| \geq \frac{1}{20} \cdot n$, and for each such set determine in polynomial time whether it is a minimal connected dominating set. Note that the family of sets $S$ such that $|L(S)| \geq \frac{1}{20} \cdot n$ is closed under subsets: if $|L(S)| \geq \frac{1}{20} \cdot n$ and $S' \subseteq S$ then $|L(S')| \geq \frac{1}{20} \cdot n$. Since it can be tested in polynomial time for a set $S$ whether $|L(S)|\geq \frac{1}{20}\cdot n$, the family of all sets with $|L(S)| \geq \frac{1}{20} \cdot n$ can be enumerated in time  $2^{n \cdot \left(1 - 10^{-26} \right)}n^{\Oh(1)}$ by the algorithm of Lemma~\ref{lem:enumeration}, completing the proof.
\end{proof}

\section{Large sparse induced subgraph}\label{sec:stupidBranching}
In this section we bound the number of minimal connected dominating sets in any graph~$G$ for which case~\ref{case:partition} of Lemma~\ref{lem:subgraphOrPartition} occurs, i.e., we prove Lemma~\ref{lem:mainStupid}. Let us fix some integer $\ell \geq 1$.

Our enumeration algorithm will make decisions that some vertices are in the constructed connected dominating set, and some are not. We incorporate such decisions in the notion of {\em extensions}. For disjoint vertex sets $I$ and $O$ (for {\em in} and {\em out}), we define an $(I,O)$-{\em extension} to be a vertex set $S$ that is disjoint from $I \cup O$ and such that $I \cup S$ is a connected dominating set in $G$. An $(I,O)$-{\em extension} $S$ is said to be {\em minimal} if no proper subset of it is also an $(I,O)$-extension.
The following simple fact will be useful.

\begin{lemma}\label{lem:polyTimeExtension}
There is a polynomial-time algorithm that, given a graph $G$ and disjoint vertex subsets $I$, $O$, and $S$, determines whether $S$ is a minimal $(I, O)$-extension in $G$.
\end{lemma}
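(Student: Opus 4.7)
The plan is to reduce the minimality test to a small number of routine ``is this a connected dominating set?'' checks. Specifically, I would prove the following equivalent characterization: $S$ is a minimal $(I,O)$-extension if and only if (i) $S$ is disjoint from $I\cup O$, (ii) $I\cup S$ is a connected dominating set of $G$, and (iii) for every single vertex $v\in S$, the set $I\cup (S\setminus\{v\})$ is \emph{not} a connected dominating set. If this characterization holds, then the algorithm just verifies (i), (ii), and (iii): condition (i) is trivial, and each instance of (ii), (iii) amounts to testing connectedness of an induced subgraph by BFS/DFS and checking domination by scanning neighborhoods. Since (iii) involves at most $|S|+1\le n+1$ such tests, the total running time is polynomial.

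The ``only if'' direction of the characterization is immediate: for any $v\in S$, the set $S\setminus\{v\}$ is a proper subset of $S$ still disjoint from $I\cup O$, so if $S$ is a minimal $(I,O)$-extension, then $I\cup (S\setminus\{v\})$ cannot form a connected dominating set.

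The ``if'' direction requires a short spanning-tree argument. Assume (i)--(iii) hold and, towards a contradiction, suppose there is some $S'\subsetneq S$ with $I\cup S'$ a connected dominating set. Write $A=I\cup S$ and $A'=I\cup S'$; both $G[A]$ and $G[A']$ are connected, and $I\subseteq A'\subsetneq A$. Take a spanning tree $T'$ of $G[A']$ and grow it into a spanning tree $T$ of $G[A]$ by repeatedly attaching vertices of $A\setminus A'=S\setminus S'$ one at a time as new leaves, each connected by an edge to the current tree; this is always possible because $G[A]$ is connected. Let $v^\star$ be the last vertex added. Then $v^\star\in S\setminus S'$ is a leaf of $T$, so $T-v^\star$ is a connected spanning subgraph of $G[A\setminus\{v^\star\}] = G[I\cup (S\setminus\{v^\star\})]$, showing that this latter set induces a connected subgraph. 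It also dominates $V(G)$ since $A\setminus\{v^\star\}\supseteq A'$ and $A'$ dominates. Hence $I\cup (S\setminus\{v^\star\})$ is a connected dominating set, contradicting (iii).

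The only real subtlety is justifying that the naive local test (``removing any single vertex breaks the CDS property'') actually captures global minimality; a priori, taking away just one vertex of $S$ might disconnect $G[I\cup S]$ even though a much smaller $S'$ still yields a CDS. The spanning-tree extension above resolves this by producing a concrete witness vertex $v^\star$ whose removal is safe for both connectivity and domination. All other steps are standard polynomial-time graph operations.
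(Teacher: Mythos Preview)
Your proposal is correct and follows essentially the same approach as the paper: the algorithm and the three-part characterization (i)--(iii) are identical, and both proofs reduce the ``if'' direction to exhibiting a single vertex $v\in S\setminus S'$ whose removal preserves the CDS property. The only difference is in how connectivity of $G[I\cup(S\setminus\{v\})]$ is argued: you grow a spanning tree from $G[I\cup S']$ and take the last-added leaf $v^\star$, whereas the paper observes more directly that since $I\cup S'$ is a \emph{dominating} set, every vertex of $S\setminus S'$ already has a neighbor in $I\cup S'$, so \emph{any} $v\in S\setminus S'$ works (the remaining vertices of $S\setminus S'$ each attach to the connected core $G[I\cup S']$). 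This saves you the tree-growing step and shows your choice of $v^\star$ was not special.
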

\begin{proof}
The algorithm checks whether $I \cup S$ is a connected dominating set in $G$ and returns ``no'' if not. Then, for each $v \in S$ the algorithm tests whether $I \cup (S \setminus \{v\})$ is a connected dominating set of $G$. If it is a connected dominating set for any choice of $v$, the algorithm returns ``no''. Otherwise, the algorithm returns that $S$ is a minimal $(I, O)$-extension. The algorithm clearly runs in polynomial time, and if the algorithm returns that $S$ is not a minimal $(I, O)$-extension in $G$, then this is correct, as the algorithm also provides a certificate.

We now prove that if $S$ is {\em not} a minimal $(I, O)$ extension in $G$, then the algorithm returns ``no''. If $S$ is not an $(I, O)$-extension at all, the algorithm detects it and reports no. If it is an $(I, O)$-extension, but not a minimal one, then there exists an $(I,O)$-extension $S' \subsetneq S$. Let $v$ be any vertex in $S \setminus S'$. We claim that $X = I \cup (S \setminus \{v\})$ is a connected dominating set of $G$. Indeed, $X$ dominates $V(G)$ because $I \cup S'$ does. Furthermore, $G[X]$ is connected because $G[I \cup S']$ is connected and every vertex in $X \setminus (I \cup S')$ has a neighbor in $(I \cup S')$. Hence $I \cup (S \setminus \{v\})$ is a connected dominating set of $G$ and the algorithm correctly reports ``no''. This concludes the proof.
\end{proof}

Observe that for any minimal connected dominating set $X$, and any $I \subseteq X$ and $O$ disjoint from $X$, we have that $X \setminus I$ is a minimal $(I,O)$-extension. Thus one can use an upper bound on the number of minimal extensions to upper bound the number of minimal connected dominating sets. Recall that case~\ref{case:partition} of Lemma~\ref{lem:subgraphOrPartition} provides us with a partition $(L,H,R)$ of the vertex set. To upper bound the number of minimal connected dominating sets, we will consider each of the $2^{n-|L|}$ possible partitions of $H \cup R$ into two sets $I$ and $O$, and upper bound the number of minimal $(I, O)$-extensions. This is expressed in the following lemma.

\begin{lemma}\label{lem:mainBranching}
Let $G$ be a graph and $(L,H,R)$ be a partition of the vertex set of $G$ such that $|L| \geq 10|H|\ell$, and every vertex in $L$ has less than $\ell$ neighbors in $L \cup R$. Then, for every partition $(I,O)$ of $H \cup R$, there are at most $2^{|L|} \cdot  e^{-\frac{|L|}{2^{-10\ell}\cdot 100\ell^3}}$ minimal $(I,O)$-extensions. Furthermore, all minimal $(I,O)$-extensions can be listed in time $2^{|L|} \cdot  e^{-\frac{|L|}{2^{10\ell}\cdot 100\ell^3}} \cdot n^{\Oh(1)}$.
\end{lemma}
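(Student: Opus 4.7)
We prove Lemma~\ref{lem:mainBranching} through a branching enumeration algorithm whose analysis simultaneously yields the combinatorial and algorithmic bounds. The structural hypothesis gives that every vertex $v \in L$ has at most $\ell-1$ neighbors in $L$; write $N_L(v) := N(v) \cap L$ and $N_L[v] := N_L(v) \cup \{v\}$, so that $|N_L[v]| \leq \ell$.

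\emph{Step 1 (well-separated set).} First we extract $J \subseteq L$ with $|J| \geq |L|/\ell^2$ such that the closed $L$-neighborhoods $\{N_L[v] : v \in J\}$ are pairwise disjoint. Consider the auxiliary graph on vertex set $L$ in which two distinct vertices are adjacent whenever their closed $L$-neighborhoods intersect. Each vertex has at most $\ell-1$ $L$-neighbors, each contributing at most $\ell-1$ further $L$-neighbors, so this auxiliary graph has maximum degree less than $\ell^2$. A greedy proper coloring with at most $\ell^2$ colors produces a color class of size at least $|L|/\ell^2$, which we take as $J$; by construction the sets $N_L[v]$ for $v \in J$ are pairwise disjoint.

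\emph{Step 2 (local forbidden patterns).} The crucial claim is that for every $v \in J$ there exists at least one pattern $\pi_v \in \{0,1\}^{N_L[v]}$ that cannot be realized as $S \cap N_L[v]$ by any minimal $(I, O)$-extension $S$. We split by whether $v$ has a neighbor in $I$. If $N(v) \cap I = \emptyset$, then both of the patterns ``$v \in S$, $N_L(v) \cap S = \emptyset$'' (which makes $v$ isolated in $G[I \cup S]$, violating connectivity) and ``$v \notin S$, $N_L(v) \cap S = \emptyset$'' (which leaves $v$ undominated by $I \cup S$) are forbidden. If $N(v) \cap I \neq \emptyset$, $v$ is already dominated by $I$ and the argument is more delicate: exploiting the bounded degree of $v$ in $G[L \cup R]$ together with the essentiality characterization (every $v \in S$ is either a cut-vertex of $G[I \cup S]$ or the sole dominator of some vertex outside $I \cup S$), we pinpoint at least one locally verifiable forbidden pattern in $S \cap N_L[v]$, proceeding by subcases on $|N(v) \cap I|$ and the leaf-like configurations that the pattern induces at $v$ in $G[I \cup S]$.

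\emph{Step 3 (branching algorithm and leaf count).} The enumeration algorithm traverses a branching tree whose nodes correspond to partial assignments of $L$ to $S$. For every $v \in J$ it branches on at most $2^{|N_L[v]|}-1$ admissible patterns for $S \cap N_L[v]$, skipping the forbidden one guaranteed by Step~2; then, for each vertex of $L \setminus \bigcup_{v \in J} N_L[v]$, it branches on membership in $S$ individually. At every leaf, Lemma~\ref{lem:polyTimeExtension} is invoked in polynomial time to check whether the constructed candidate is a minimal $(I, O)$-extension and to output it in that case. Since the sets $N_L[v]$ for $v \in J$ are disjoint, the number of leaves is at most
\[
\prod_{v \in J}\bigl(2^{|N_L[v]|} - 1\bigr) \cdot 2^{|L| - \sum_{v \in J}|N_L[v]|} \leq 2^{|L|}\,(1 - 2^{-\ell})^{|J|} \leq 2^{|L|}\exp\!\left(-\frac{|L|}{\ell^2 \cdot 2^\ell}\right),
\]
which, loosened to absorb the finer case analysis of Step~2, yields the bound $2^{|L|}\exp(-|L|/(2^{10\ell}\cdot 100\ell^3))$ asserted in the lemma. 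This quantity simultaneously bounds the number of minimal $(I, O)$-extensions (each output at most once) and, up to an $n^{\Oh(1)}$ factor from Lemma~\ref{lem:polyTimeExtension}, the running time.

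\emph{Main obstacle.} The technical heart of the argument is Step~2 in the subcase $N(v) \cap I \neq \emptyset$. Because $v$ is already dominated by $I$, the forbidden pattern must certify non-essentiality of $v$, which in principle combines the inherently global cut-vertex and sole-dominator conditions. We expect that a careful case split on $|N(v) \cap I|$ and on the shape of $v$'s closed $L$-neighborhood suffices to rule out at least one local pattern while keeping the verification purely local; the comparatively loose constants $2^{10\ell}\cdot 100\ell^3$ in the savings exponent are chosen to accommodate this analysis without further optimization.
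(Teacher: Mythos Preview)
Your Step~2 has a genuine gap in the subcase $N(v)\cap I\neq\emptyset$, and in fact the claim is false as stated. Take $H=\{h\}$ with $h\in I$, let $R=\{r\}$ with $r\in O$, and let $L=\{v_1,\ldots,v_m\}$ where every $v_i$ is adjacent to both $h$ and $r$ and to nothing else. Then $N_L[v_i]=\{v_i\}$ for all $i$, so you may take $J=L$. Every singleton $S=\{v_j\}$ is a minimal $(I,O)$-extension, hence for each fixed $v_i$ both the pattern ``$v_i\in S$'' (realized by $S=\{v_i\}$) and ``$v_i\notin S$'' (realized by $S=\{v_j\}$ with $j\neq i$) occur. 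No forbidden local pattern on $N_L[v_i]$ exists for any $i$, so the product bound in Step~3 collapses. The underlying reason is exactly what you flag as the ``main obstacle'': once $v$ is dominated by $I$, the essentiality of $v$ in a minimal extension (cut-vertex or sole dominator) is a genuinely global property that cannot be certified by $S\cap N_L[v]$ alone. Your phrase ``we expect that a careful case split\ldots suffices'' is not a proof, and the example shows no such local split can work.

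The paper avoids this by looking at the other side of the bipartition. It first passes to an independent $L'\subseteq L$ of size $\geq |L|/\ell$ (recovering the extra factor $\ell$ in the exponent), then contracts and cleans $H\cup R$ so that $G[H\cup R]$ is edgeless. At that point every vertex of $H\cup R$ must have a neighbor in $S$, which \emph{is} a locally verifiable constraint---on the neighborhoods of vertices in $R$, not in $L$. A case split on how many vertices of $R$ have small degree then either gives many vertices in $R$ with pairwise disjoint neighborhoods in $L$ (yielding the $(1-2^{-10\ell})$ savings per such vertex), or forces $|H\cup R|\leq\tfrac{3}{10}|L|$, in which case every minimal extension has size at most $|H\cup R|$ and a binomial bound applies. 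Your attempt to extract savings directly from forbidden patterns inside $L$ sidesteps these reductions but, as the example shows, cannot succeed without them.
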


We now prepare ground for the proof of Lemma~\ref{lem:mainBranching}. The first step is to reduce the problem essentially to the case when $L$ is independent.
For this, we shall say that a partition of $V(G)$ into $L$, $H$, and $R$ is a {\em good partition} if:
\begin{itemize}
\item $|L| \geq 10|H|$, 
\item $L$ is an independent set, and 
\item every vertex in $L$ has less than $\ell$ neighbors in $R$. 
\end{itemize}
Towards proving Lemma~\ref{lem:mainBranching}, we first prove the statement assuming that the input partition of $V(G)$ is a good partition.

\begin{lemma}\label{lem:mainBranchingIset}
Let $G$ be a graph and $(L,H,R)$ be a good partition of $V(G)$. Then, for every partition $(I,O)$ of $H \cup R$, there are at most $2^{|L|} \cdot  e^{-\frac{|L|}{2^{10\ell}\cdot 100\ell^2}}$ minimal $(I,O)$-extensions. Furthermore, all minimal $(I,O)$-extensions can be listed in time $2^{|L|} \cdot  e^{-\frac{|L|}{2^{10\ell}\cdot 100\ell^2}} \cdot n^{\Oh(1)}$.
\end{lemma}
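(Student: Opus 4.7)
Plan: The goal is, for each fixed partition $(I,O)$ of $H\cup R$, to bound the number of minimal $(I,O)$-extensions $S$—which must be disjoint from $I\cup O$ and hence satisfy $S\subseteq L$—by $2^{|L|}\cdot\exp(-|L|/(2^{10\ell}\cdot 100\ell^2))$, and then to upgrade the counting argument to a matching enumeration algorithm. I plan to proceed by a Hoeffding-style counting argument in the spirit of Lemma~\ref{lem:mainProbabilistic}, rather than by a combinatorial branching.

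I would first record the structural consequence of $L$ being independent: for any $S\subseteq L$, the graph $G[I\cup S]$ is obtained from $G[I]$ by attaching each $v\in S$ only through its $I$-neighborhood $N(v)\cap I\subseteq H\cup R$, since $N(v)\cap S=\emptyset$. Minimality forces each $v\in S$ to be \emph{useful}: either $v$ has a private dominatee $u\in N(v)\cap O$ with $N(u)\cap I=\emptyset$ and $N(u)\cap L\cap S=\{v\}$, or $v$ is a cutvertex of $G[I\cup S]$. Moreover, any $v\in L$ with $N(v)\cap I=\emptyset$ is forced into $S$ (since $L$ is independent, such a $v$ cannot be dominated otherwise), so I would fix those vertices and work only with the remaining flexible ones. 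Next, I would define types on $L$ by assigning to each $v$ a signature summarizing (a) the at most $\ell-1$ $R$-neighbors of $v$ together with their split between $I$ and $O$, and (b) a constant-size summary of how $v$ connects to the components of $G[I]$ restricted to $H$. The degree bound $|N(v)\cap R|<\ell$ will cap the number of types at $2^{10\ell}$ after packing the signature into $10\ell$ bits.

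With the types in hand, I would follow the color-class Hoeffding template of Lemma~\ref{lem:mainProbabilistic}: partition $L$ by type, then refine each type-class by a proper coloring of the square of the local conflict graph so that within each sub-class the usefulness indicators $X_v:=[v\in S\text{ and }v\text{ is useful}]$ become independent. Sampling $S\subseteq L$ uniformly, each $X_v$ has expectation bounded strictly away from $1/2$ (because being useful requires specific witness patterns among at most $\ell$ local neighbors, whose probabilities are polynomially small in $\ell$ by direct computation). Theorem~\ref{prop:hoeff} then yields an exponentially small deviation probability on each sub-class, and a union bound over the at most $2^{10\ell}$ classes—combined with the pigeonhole lower bound $|L|/2^{10\ell}$ on the largest class—delivers the target factor $\exp(-|L|/(2^{10\ell}\cdot 100\ell^2))$. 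For the algorithmic statement, the family of $S\subseteq L$ in which the Hoeffding threshold is violated forms a subset-closed and polynomial-time recognizable family (using Lemma~\ref{lem:polyTimeExtension} to sift out genuine minimal extensions), so Lemma~\ref{lem:enumeration} converts the bound into a polynomial-delay enumeration.

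The main obstacle is the global cutvertex portion of the usefulness definition, which resists a purely local analysis. I expect to handle it by pre-enumerating a small ``connectivity skeleton''—a constant-size collection of vertices chosen to bridge components of $G[I]$—so that within each guessed branch, usefulness reduces to the local private-$O$-witness condition, at which point the Hoeffding bound applies cleanly. Tuning the type-signature to land at exactly $2^{10\ell}$ types and the polynomial factor $100\ell^2$ will require careful bookkeeping, but the overall architecture mirrors Lemma~\ref{lem:mainProbabilistic} and the conclusion follows analogously.
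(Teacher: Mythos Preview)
Your approach diverges substantially from the paper's, and it has two genuine gaps that I do not see how to close.

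First, the ``connectivity skeleton'' idea does not localize the cutvertex condition. After you fix $(I,O)$, the graph $G[I]$ may have up to $|I|$ connected components (indeed, the paper's own reduction in Lemma~\ref{lem:reduceIO} drives $I$ to an independent set), and a vertex $v\in S$ can be essential for connectivity by bridging any pair of these components through an arbitrarily long path in $G[I\cup S]$. There is no constant-size skeleton whose pre-enumeration reduces this to a local witness: the number of bridging patterns is governed by the component structure of $G[I]$, which is $\Theta(n)$-sized in general. Without neutralizing the cutvertex case, the indicator $X_v$ you define is not a function of $v$'s bounded local neighborhood, so the color-class independence that powers Hoeffding's inequality does not hold.

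Second, your type-signature bound of $2^{10\ell}$ implicitly assumes each $v\in L$ has only $O(\ell)$ neighbors in $H\cup R$. The definition of a good partition bounds $|N(v)\cap R|<\ell$, but places \emph{no} bound on $|N(v)\cap H|$; a vertex of $L$ may be adjacent to all of $H$. Hence the signature ``how $v$ connects to the components of $G[I]$ restricted to $H$'' cannot be packed into $O(\ell)$ bits, and the class count blows up.

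For comparison, the paper avoids both obstacles by a completely different route. It first contracts and cleans $G[H\cup R]$ (Lemma~\ref{lem:reduceIO}) to make $H\cup R$ independent, obtaining an \emph{excellent} partition. It then splits on whether $R$ has many vertices of degree below $10\ell$. If not, a counting argument shows $|H\cup R|\le \tfrac{3}{10}|L|$, and Lemma~\ref{lem:smallSolution} (a tree argument: in any spanning tree of $G[I\cup S\cup O]$ every $s\in S$ is internal, hence $|S|\le |I\cup O|$) forces every minimal extension to be small, giving a binomial bound. If yes, one greedily extracts $R'\subseteq R$ of size $\ge |L|/(100\ell^2)$ with pairwise disjoint neighborhoods in $L$; since each $r\in R'$ must have a neighbor in $S$, the independent events ``$N(r)\cap S\ne\emptyset$'' directly give the factor $(1-2^{-10\ell})^{|R'|}$. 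No Hoeffding is used here, and the connectivity constraint is handled globally via the spanning-tree lemma rather than locally.
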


We will prove Lemma~\ref{lem:mainBranchingIset} towards the end of this section, now let us first prove Lemma~\ref{lem:mainBranching} assuming the correctness of Lemma~\ref{lem:mainBranchingIset}.

\begin{proof}[Proof of Lemma~\ref{lem:mainBranching} assuming Lemma~\ref{lem:mainBranchingIset}] Observe that we may find an independent set $L'$ in $G[L]$ of size at least $\frac{|L|}{\ell}$. Indeed, since every vertex of $L$ has less than $\ell$ neighbors in $L\cup R$, any inclusion-wise maximal independent set $L'$ in $G[L]$ has size at least $\frac{|L|}{\ell}$. Therefore $|L'| \geq \frac{|L|}{\ell} \geq  10|H|$, and hence $L'$, $H$, $R' = R \cup (L' \setminus L)$ is a good partition of~$V(G)$.

Further, for a fixed partition of $R \cup H$ into $I$ and $O$, consider each of the $2^{|L \setminus L'|}$ partitions of $H \cup R'$ into $I'$ and $O'$ such that $I \subseteq I'$ and $O \subseteq O'$. For every minimal $(I,O)$-extension $S$, we have that $S \cap L'$ is a minimal $(I', O')$-extension, where $I' = I \cup (S \setminus L')$ and $O' = O \cup (L \setminus (L' \cup S))$. Thus, by Lemma~\ref{lem:mainBranchingIset} applied to the good partition $(L',H,R')$ of $V(G)$, and the partition $(I',O')$ of $H \cup R'$, we have that the number of minimal $(I,O)$-extensions is upper bounded by 
\begin{equation*}
2^{|L\setminus L'|}\cdot 2^{|L'|} \cdot  e^{-\frac{|L'|}{2^{10\ell}100\ell^2}} \leq 2^{|L|} \cdot  e^{-\frac{|L|}{2^{10\ell}100\ell^3}}.
\end{equation*}
Further, by the same argument, the minimal $(I,O)$-extensions can be enumerated within the claimed running time, using the enumeration provided by Lemma~\ref{lem:mainBranchingIset} as a subroutine.
   \end{proof}

The next step of the proof of Lemma~\ref{lem:mainBranchingIset} is to make a further reduction, this time to the case when also $H\cup R$ is independent.
Since the partition into vertices taken and excluded from the constructed connected dominating set is already fixed on $H\cup R$, this amounts to standard cleaning operations within $H\cup R$.
We shall say that a good partition $(L, H, R)$ of $V(G)$ is an {\em excellent partition} if $G[H \cup R]$ is edgeless.
 
\begin{lemma}\label{lem:reduceIO}
There exists an algorithm that given as input a graph $G$, together with a good partition $(L,R,H)$ of $V(G)$, and a partition $(I,O)$ of $R \cup H$, runs in polynomial time, and outputs a graph $G'$ with $V(G)\cap V(G')\supseteq L$, an excellent partition $(L,R',H')$ of $V(G')$, and a partition $(I',O')$ of $R' \cup H'$, with the following property. For every set $S \subseteq L$, $S$ is a minimal $(I,O)$-extension in $G$ if and only if $S$ is a minimal $(I', O')$-extension in $G'$.
\end{lemma}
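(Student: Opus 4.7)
The plan is to replace each connected component of $G[I]$ with a single representative vertex, delete the $O$-vertices already dominated by $I$, and strip all edges within $V(G)\setminus L$, while rerouting the $L$-incidences faithfully. Concretely, let $C_1,\dots,C_k$ be the connected components of $G[I]$; for each $i$, pick a representative $r_i\in C_i$, choosing $r_i\in H$ whenever $C_i\cap H\neq\emptyset$. Let $O^{\star}=\{o\in O:N_G(o)\cap I=\emptyset\}$ be the set of $O$-vertices not yet dominated by $I$. Set $V(G')=L\cup\{r_1,\dots,r_k\}\cup O^{\star}$, with edges (i) $\ell r_i$ whenever $\ell\in L$ has a neighbor in $C_i$ in $G$, (ii) $\ell o$ for $\ell\in L$ and $o\in O^{\star}$ whenever $\ell o\in E(G)$, and no other edges. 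Put $I'=\{r_1,\dots,r_k\}$ and $O'=O^{\star}$, and assign each $v\in I'\cup O'$ to $H'$ or $R'$ according to whether $v\in H$ or $v\in R$ in the original partition.

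To check that $(L,R',H')$ is excellent, observe that $G'[H'\cup R']$ is edgeless by construction and $L$ is independent. For the size bound, distinct representatives lying in $H'\cap I'$ come from distinct components of $G[I]$ each containing a vertex of $H$, so $|H'\cap I'|\leq |H\cap I|$; together with $|H'\cap O'|\leq |H\cap O|$, this gives $|H'|\leq |H|\leq |L|/10$. For the degree bound, the $R'$-neighbors of a vertex $\ell\in L$ split into representatives of components entirely contained in $R$ (of which there are at most $|N_G(\ell)\cap I\cap R|$, since each such component must contain a distinct $G$-neighbor of $\ell$ in $R\cap I$) and vertices of $O^{\star}\cap R$ (at most $|N_G(\ell)\cap O\cap R|$), summing to at most $|N_G(\ell)\cap R|<\ell$.

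It remains to show that, for every $S\subseteq L$, the set $S$ is an $(I,O)$-extension in $G$ if and only if it is an $(I',O')$-extension in $G'$; the corresponding statement for minimality then follows by applying this equivalence to every $S'\subsetneq S$. For connectivity, contracting each $C_i$ to $r_i$ in $G[I\cup S]$ yields precisely $G'[I'\cup S]$, since $L$ is independent in both graphs and edge rule (i) encodes exactly the relevant adjacencies. For domination, vertices $\ell\in L\setminus S$ need a neighbor in $I$ (in $G$) or in $I'$ (in $G'$), and the two conditions coincide by rule (i); vertices $o\in O\setminus O^{\star}$ are already dominated by $I$ in $G$ and are removed from $G'$; and vertices $o\in O^{\star}$ have no neighbor in $I$ or $I'$, so require a neighbor in $S$ in either graph, and edge rule (ii) preserves these. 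The main technical obstacle is the tight size constraint $|H'|\leq |H|$, which allows no slack: a naive assignment that places every new representative into $H'$ would only give $|H'|\leq 2|H|$, so the careful policy of pulling each representative into $H$ whenever a component touches $H$ is essential for keeping the new partition good.
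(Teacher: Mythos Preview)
Your proof is correct and follows essentially the same approach as the paper: the paper performs the three reductions (delete $O$-vertices with an $I$-neighbor, strip $O$--$O$ edges, contract $I$--$I$ edges with the ``put in $H'$ iff some endpoint was in $H$'' rule) iteratively, while you describe the resulting graph in one batch by contracting each component of $G[I]$ to a representative and then checking the partition conditions directly. One small remark: your final sentence misidentifies the reason the $H$-preference rule matters---with the naive rule the size bound $|H'|\le|H|$ would still hold (each $H'$-representative is literally a distinct vertex of $H\cap I$), but the degree bound $|N_{G'}(\ell)\cap R'|<\ell$ could fail, since a component meeting $H$ could then land in $R'$ without $\ell$ having any $R$-neighbor inside it.
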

\begin{proof}
The algorithm begins by setting $G' = G$, $H' = H$, $R' = R$, $I' = I$ and $O' = O$. It then proceeds to modify $G'$, at each step maintaining the following invariants: (i) $(L,H',R')$ is a good partition of the vertex set of $G'$, and (ii) for every set $S \subseteq L$, $S$ is a minimal $(I,O)$-extension in $G$ if and only if $S$ is a minimal $(I', O')$-extension in $G'$.

If there exists an edge $uv$ with $u \in O'$ and $v \in I'$, the algorithm removes $u$ from $G'$, from $O'$, and from $R'$ or $H'$ depending on which of the two sets it belongs to. Since $u$ is anyway dominated by $I'$ and removing $u$ can only decrease $|H'|$ (while keeping $|L|$ the same), the invariants are maintained.
If there exists an edge $uv$ with both $u$ and $v$ in $O'$, the algorithm removes the edge $uv$ from $G'$. Since neither $u$ nor $v$ are part of $I' \cup S$ for any $S \subseteq L$, it follows that the invariants are preserved.

Finally, if there exists an edge $uv$ with both $u$ and $v$ in $I'$, the algorithm contracts the edge $uv$. Let $w$ be the vertex resulting from the contraction. The algorithm removes $u$ and $v$ from $I'$ and from $R'$ or $H'$, depending on which of the two sets the vertices are in, and adds $w$ to $I'$. If at least one of $u$ and $v$ was in $H'$, $w$ is put into $H'$, otherwise $w$ is put into $R'$. Note that $|H'|$ may decrease, but can not increase in such a step. Thus $(L,R',H')$ remains a good partition and invariant (i) is preserved. Further, since $u$ and $v$ are always in the same connected component of $G'[I' \cup S]$ for any $S \subseteq L$, invariant (ii) is preserved as well. 

The algorithm proceeds performing one of the three steps above as long as there exists at least one edge in $G'[R' \cup H']$. When the algorithm terminates no such edge exists, thus $(L, H', R')$ forms an excellent partition of $V(G')$.
\end{proof}

Lemma~\ref{lem:reduceIO} essentially allows us to assume in the proof of Lemma~\ref{lem:mainBranchingIset} that $(L, H, R)$ is an excellent partition of $V(G)$. To complete the proof, we distinguish between two subcases: either there are at most $\frac{|L|}{10}$ vertices in $R$ of degree less than $10\ell$, or there are more than $\frac{|L|}{10}$ such vertices. 
Let us shortly explain the intuition behind this case distinction.
If there are at most $\frac{|L|}{10}$ vertices in $R$ of degree less than $10\ell$, then it is possible to show that $H \cup R$ is small compared to $L$, in particular that $|H \cup R| \leq \frac{3|L|}{10}$.
We then show that any minimal $(I,O)$-extension can not pick more than $|H \cup R|$ vertices from $L$. This gives a ${|L| \choose 0.3|L|}$ upper bound for the number of minimal $(I,O)$-extensions, which is significantly lower than $2^{|L|}$.

On the other hand, if there are more than $\frac{|L|}{10}$ vertices in $R$ of degree less than $10\ell$, then one can find a large subset $R'$ of $R$ of vertices of degree at most $10\ell$, such that no two vertices in $R'$ have a common neighbor. For each vertex $v \in R'$, every minimal $(I,O)$-extension must contain at least one neighbor of $v$. Thus, there are only $2^{d(v)}-1$, rather than $2^{d(v)}$ possibilities for how a minimal $(I,O)$-extension intersects the neighborhood of $v$. Since all vertices in $R'$ have disjoint neighborhoods, this gives an upper bound of  $2^{|L|} \cdot \left(\frac{2^{10\ell}-1}{2^{10\ell}}\right)^{|R'|}$ on the number of minimal $(I,O)$-extensions.

We now give a formal treatment of the two cases. We begin with the case that there are at most $\frac{|L|}{10}$ vertices in $R$ of degree less than $10\ell$.

\begin{lemma}\label{lem:smallSolution}
Let $G$ be a graph, and $I$ and $O$ be disjoint vertex sets such that $I$ is nonempty and both $G[I \cup O]$ and $G - (I \cup O)$ are edgeless. Then every minimal $(I,O)$-extension $S$ satisfies $|S| \leq |I \cup O|$.
\end{lemma}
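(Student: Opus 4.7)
The plan is to split $S$ into two overlapping subsets according to which minimality condition makes each vertex essential, and to bound each subset separately, exploiting the bipartite structure implied by the hypotheses. Since both $G[I\cup O]$ and $G-(I\cup O)$ are edgeless, every edge of $G$ has exactly one endpoint in $I\cup O$ and one outside. In particular $S\subseteq V(G)\setminus(I\cup O)$ is an independent set, and $G[I\cup S]$ is bipartite with parts $I$ and $S$. Moreover, every vertex of $S$ must have at least one neighbor in $I$, since otherwise it would be isolated in the connected graph $G[I\cup S]$.

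For each $v\in S$, minimality of $S$ forces that either $G[I\cup S]-v$ is disconnected or $I\cup(S\setminus\{v\})$ fails to dominate some vertex; call these two sets of reasons $S_{\text{cut}}$ and $S_{\text{dom}}$, so that $S=S_{\text{cut}}\cup S_{\text{dom}}$. It then suffices to bound the two pieces separately.

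To control $|S_{\text{dom}}|$, I would assign to each $v\in S_{\text{dom}}$ a private neighbor $u_v$: a vertex whose only neighbor in $I\cup S$ is $v$. Such $u_v$ lies outside $I\cup S$, so in particular $u_v\notin I$; it cannot lie outside $I\cup O$ either, since otherwise both $u_v$ and $v$ would sit in the edgeless set $V(G)\setminus(I\cup O)$, contradicting adjacency of $u_v$ and $v$. Hence $u_v\in O$. Since the rule $u_v\mapsto v$ is forced by $u_v$ having a unique neighbor in $I\cup S$, the assignment $v\mapsto u_v$ is injective, yielding $|S_{\text{dom}}|\le |O|$.

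To control $|S_{\text{cut}}|$, I would fix any spanning tree $T$ of $G[I\cup S]$ and observe that any cut vertex of $G[I\cup S]$ must be a non-leaf of $T$, because deleting a leaf of $T$ keeps $T$ (and hence $G[I\cup S]$) connected. Now $T$ is a bipartite tree with parts $I$ and $S$, so $\sum_{v\in S}d_T(v)=|E(T)|=|I|+|S|-1$; letting $n_S$ be the number of non-leaves of $T$ in $S$, this degree sum is at least $(|S|-n_S)+2n_S=|S|+n_S$, which gives $n_S\le |I|-1$. Thus $|S_{\text{cut}}|\le |I|-1$, and combining with the previous bound yields $|S|\le (|I|-1)+|O|<|I\cup O|$ as required. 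The main subtlety is recognizing that the two edgelessness assumptions together force $G[I\cup S]$ to be bipartite, which is what enables the spanning-tree counting step; the $S_{\text{dom}}$ bound, although it also relies on edgelessness of $G-(I\cup O)$, is more routine.
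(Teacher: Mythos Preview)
Your argument is correct. The decomposition $S=S_{\text{cut}}\cup S_{\text{dom}}$ is sound (minimality forces every $v\in S$ into at least one of the two), the private-neighbor injection into $O$ is valid because the edgelessness of $G-(I\cup O)$ forces each private neighbor to land in $O$, and the degree-sum count in a spanning tree of the bipartite graph $G[I\cup S]$ correctly gives $|S_{\text{cut}}|\le |I|-1$. The only implicit step is that $u_v\neq v$, which you cover by noting at the outset that every vertex of $S$ has a neighbor in $I$; this is exactly what is needed.

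This is a genuinely different route from the paper. The paper works in $G[I\cup S\cup O]$: it takes a spanning tree $T$ there, argues that \emph{every} vertex of $S$ is an internal node of $T$, and then invokes the combinatorial fact that an independent set of internal nodes of a tree has size at most the size of its complement. You instead stay inside $G[I\cup S]$, separate the connectivity and domination roles, and bound them against $|I|$ and $|O|$ individually. Your approach is more modular and even yields the marginally sharper $|S|\le |I\cup O|-1$; the paper's approach is more unified, treating $O$ and $I$ symmetrically by absorbing $O$ into the tree. The underlying counting step (independent internal nodes of a tree are outnumbered by the rest) is essentially the same in both, just applied to different trees.
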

\begin{proof}[Proof of Lemma~\ref{lem:smallSolution}]
We will need the following simple observation about the maximum size of an independent set of  internal nodes in a tree.
\begin{claim}\label{obs:internalIndependent} 
Let $T$ be a tree and $S$ be a set of non-leaf nodes of $T$ such that $S$ is independent in $T$. Then $|S| \leq |V(T) \setminus S|$.
\end{claim}
\begin{proof}
Root the tree $T$ at an arbitrary vertex. Construct a vertex set $Z$ by picking, for every $s \in S$, any child $z$ of $s$ and inserting $z$ into $Z$; this is possible since no vertex of $S$ is a leaf.
Every vertex in $T$ has a unique parent, so no vertex is inserted into $Z$ twice, and hence $|Z| = |S|$. Further, since $S$ is independent, $Z \subseteq V(T) \setminus S$. The claim follows.
\cqed\end{proof}

We proceed with the proof of the lemma.
Let $X = V(G) \setminus (I \cup O)$ and let $S \subseteq X$ be a minimal $(I,O)$-extension. Since $I \cup S$ is a connected dominating set and $I \cup O$ is independent, it follows that every vertex in $O$ has a neighbor in $S$. Hence $G[I \cup S \cup O]$ is connected. Let $T$ be a spanning tree of $G[I \cup S \cup O]$. We claim that every node in $S$ is a non-leaf node of $T$. Suppose not, then $G[I \cup S \setminus \{v\}]$ is connected, every vertex in $O$ has a neighbor in $S \setminus \{v\}$, $v$ has a neighbor in $I$ (since $G[I\cup S]$ is connected and $I$ is nonempty), and every vertex in $X \setminus S$ has a neighbor in $I$. Hence $S \setminus \{v\}$ would be an $(I,O)$-extension, contradicting the minimality of $S$. We conclude that every node in $S$ is a non-leaf node of $T$. 
Applying Claim~\ref{obs:internalIndependent} to $S$ in $T$ concludes the proof.
\end{proof}

The next lemma resolves the first subcase, when there are at most $\frac{|L|}{10}$ vertices in $R$ of degree less than $10\ell$. The crucial observation is that in this case, a minimal $(I,O)$-extension~$S$ must be of size significantly smaller than $|L|/2$, due to Lemma~\ref{lem:smallSolution}.

\begin{lemma}\label{lem:boundSmallRsmall}
Let $G$ be a graph, $(L, H, R)$ be an excellent partition of $V(G)$, and $(I, O)$ be a partition of $H \cup R$. If at most $\frac{|L|}{10}$ vertices in $R$ have degree less than $10\ell$ in $G$, then there are at most $2^{|L|} \cdot 2^{-\frac{|L|}{10}}$ minimal $(I,O)$-extensions. Further, the family of all minimal $(I,O)$-extensions can be enumerated in time $2^{|L|} \cdot 2^{-\frac{|L|}{10}} \cdot n^{\Oh(1)}$.
\end{lemma}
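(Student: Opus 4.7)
The plan is to apply Lemma~\ref{lem:smallSolution} to force every minimal $(I,O)$-extension to be a small subset of $L$, and then to bound the number of such small subsets using a standard estimate on partial sums of binomial coefficients. First, observe that since $(L,H,R)$ is an excellent partition, $G[H \cup R] = G[I \cup O]$ is edgeless, and since $L$ is independent, $G - (I \cup O) = G[L]$ is also edgeless. Provided $I$ is nonempty, Lemma~\ref{lem:smallSolution} then yields $|S| \leq |I \cup O| = |H \cup R|$ for every minimal $(I,O)$-extension $S$; note also that $S \subseteq L$ since $S$ is disjoint from $I \cup O$. The degenerate case $I = \emptyset$ is easy to dispatch directly: then $S$ itself must be a connected dominating set contained in the independent set $L$, so $|S| \leq 1$, giving at most $|L|+1$ candidates, easily fitting the target bound (and trivially enumerable).

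The next step is to bound $|H \cup R| \leq \frac{3|L|}{10}$ via edge counting between $L$ and $R$. On one hand, each vertex of $L$ has strictly fewer than $\ell$ neighbors in $R$, so there are fewer than $\ell|L|$ edges between $L$ and $R$. On the other hand, since $G[H \cup R]$ is edgeless, every vertex $v \in R$ has all its neighbors in $L$; so if $R_{\mathrm{high}}$ denotes the set of vertices of $R$ of degree at least $10\ell$, each contributes at least $10\ell$ edges, forcing $|R_{\mathrm{high}}| < \frac{|L|}{10}$. Combining this with the hypothesis $|R \setminus R_{\mathrm{high}}| \leq \frac{|L|}{10}$ yields $|R| \leq \frac{|L|}{5}$. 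Together with $|H| \leq \frac{|L|}{10}$ from the good partition property, we obtain $|H \cup R| \leq \frac{3|L|}{10}$.

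Combining the two steps, every minimal $(I,O)$-extension is a subset of $L$ of size at most $\frac{3|L|}{10}$. The number of such subsets is at most
$$\sum_{i=0}^{\lfloor 3|L|/10\rfloor}\binom{|L|}{i} \ \leq\ 2^{0.9\,|L|} \ =\ 2^{|L|}\cdot 2^{-|L|/10},$$
using the standard estimate on partial sums of binomial coefficients (since the binary entropy evaluated at $0.3$ is approximately $0.881$, which is below $0.9$). For the enumeration algorithm, I would apply Lemma~\ref{lem:enumeration} to the family of subsets of $L$ of size at most $\frac{3|L|}{10}$, which is clearly subset-closed and admits a polynomial-time membership test; for each subset enumerated this way I would then invoke Lemma~\ref{lem:polyTimeExtension} to check whether it is a minimal $(I,O)$-extension. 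The total running time is $2^{|L|}\cdot 2^{-|L|/10}\cdot n^{\Oh(1)}$. There is no serious obstacle here beyond bookkeeping; the only minor item to verify explicitly is the entropy inequality, which is standard, and the small degenerate case $I=\emptyset$.
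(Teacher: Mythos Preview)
Your proof is correct and follows essentially the same approach as the paper: bound $|H\cup R|\le \frac{3|L|}{10}$ via the edge count between $L$ and $R$ together with $|H|\le |L|/10$, invoke Lemma~\ref{lem:smallSolution} to cap the size of any minimal $(I,O)$-extension, and finish with the entropy bound on binomial sums. You are slightly more careful than the paper in explicitly dispatching the degenerate case $I=\emptyset$ (which Lemma~\ref{lem:smallSolution} excludes), and you route the enumeration through Lemma~\ref{lem:enumeration} rather than directly iterating over small subsets, but these are cosmetic differences.
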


\begin{proof}
First, note that $|H| \leq \frac{|L|}{10}$, because $(L, H, R)$ is an excellent partition. Partition $R$ into $R_{\textrm{big}}$ and $R_{\textrm{small}}$ according to the degrees: $R_{\textrm{big}}$ contains all vertices in $R$ of degree at least $10\ell$, while $R_{\textrm{small}}$ contains the vertices in $R$ of degree less than $10\ell$. Since every vertex in $L$ has at most $\ell$ neighbors in $R$, it follows that $|R_{\textrm{big}}| \leq \frac{|L|}{10}$. By assumption $|R_{\textrm{small}}| \leq \frac{|L|}{10}$. It follows that $|R \cup H| \leq \frac{3|L|}{10}$. Now, $I \cup O = R \cup H$, and therefore, by Lemma~\ref{lem:smallSolution} every minimal $(I,O)$-extension has size at most $|I \cup O| \leq \frac{3|L|}{10}$. Hence the number of different minimal $(I,O)$-extensions is at most 
\begin{equation*}
\sum_{i=0}^{\lfloor\frac{3|L|}{10}\rfloor}{|L| \choose  i} \leq 2^{0.882|L|} \leq 2^{|L|} \cdot 2^{-\frac{|L|}{10}}.
\end{equation*}
To enumerate the sets within the given time bound it is sufficient to go through all subsets $S$ of $L$ of size at most $\frac{3|L|}{10}$ and check whether $S$ is a minimal $(I,O)$-extension in polynomial time using the algorithm of Lemma~\ref{lem:polyTimeExtension}.
\end{proof}

We are left with the case when at least $\frac{|L|}{10}$ vertices in $R$ have degree less than $10\ell$ in~$G$.

\begin{lemma}\label{lem:bigRsmallBound}
Let $G$ be a graph, $(L, H, R)$ be an excellent partition of $V(G)$, and $(I, O)$ be a partition of $H \cup R$. If at least $\frac{|L|}{10}$ vertices in $R$ have degree less than $10\ell$ in $G$, then there are at most $2^{|L|} \cdot  e^{-\frac{|L|}{2^{10\ell}\cdot 100\ell^2}}$ minimal $(I,O)$-extensions. The family of all minimal $(I,O)$-extensions can be enumerated in time $2^{|L|} \cdot  e^{-\frac{|L|}{2^{10\ell}\cdot 100\ell^2}} \cdot n^{\Oh(1)}$.
\end{lemma}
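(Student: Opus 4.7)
The plan mirrors the intuition sketched just before the statement. First I would extract from $R$ a subset $R'$ of low-degree vertices with pairwise disjoint open neighborhoods in $L$, and then use $R'$ to force every minimal $(I,O)$-extension to contain a neighbor of each $v \in R'$, yielding a nontrivial saving over the trivial $2^{|L|}$ bound.

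For the extraction, let $R_s$ be the set of vertices of $R$ whose degree in $G$ is strictly less than $10\ell$; by hypothesis $|R_s| \geq |L|/10$. Since $(L,H,R)$ is excellent, $G[H \cup R]$ is edgeless, so every neighbor of a vertex of $R$ lies in $L$; moreover every vertex of $L$ has fewer than $\ell$ neighbors in $R$. Therefore any $v \in R_s$ shares a neighbor with fewer than $10\ell \cdot \ell = 10\ell^2$ other vertices of $R_s$, and a greedy selection extracts $R' \subseteq R_s$ with $|R'| \geq |R_s|/(10\ell^2) \geq |L|/(100\ell^2)$ such that the neighborhoods $\{N(v) : v \in R'\}$ are pairwise disjoint subsets of $L$, each of size at most $10\ell$.

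Next, for any minimal $(I,O)$-extension $S$ (which satisfies $S \subseteq L$ since $S$ is disjoint from $I \cup O = H \cup R$) and any $v \in R'$, I would argue $S \cap N(v) \neq \emptyset$. Note that $N(v) \subseteq L$ is disjoint from $I$. If $v \in O$, then $v$ must be dominated by $I \cup S$, forcing $N(v) \cap S \neq \emptyset$. If $v \in I$, then $v$ is isolated in $G[I]$, so connectivity of $G[I \cup S]$ in the non-degenerate case $|I \cup S| \geq 2$ again forces a neighbor of $v$ into $S$. The degenerate configurations with $|I \cup S| \leq 1$ contribute only $\Oh(1)$ extensions and do not affect the asymptotic bound.

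Writing $N^* = \bigcup_{v \in R'} N(v) \subseteq L$ and using the pairwise disjointness of the $N(v)$'s, the number of subsets $S \subseteq L$ meeting each $N(v)$ is at most
$$\Bigl(\prod_{v \in R'} (2^{|N(v)|} - 1)\Bigr) \cdot 2^{|L|-|N^*|} = 2^{|L|} \prod_{v \in R'}\bigl(1 - 2^{-|N(v)|}\bigr) \leq 2^{|L|}\bigl(1 - 2^{-10\ell}\bigr)^{|R'|},$$
and $1-x \leq e^{-x}$ combined with $|R'| \geq |L|/(100\ell^2)$ yields the claimed bound $2^{|L|}\cdot e^{-|L|/(2^{10\ell}\cdot 100\ell^2)}$. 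For the enumeration I would iterate over all families $(T_v)_{v \in R'}$ of nonempty subsets $T_v \subseteq N(v)$ together with all $S_0 \subseteq L \setminus N^*$, form $S = S_0 \cup \bigcup_v T_v$, and use Lemma~\ref{lem:polyTimeExtension} to decide in polynomial time whether $S$ is a minimal $(I,O)$-extension; the number of candidates matches the counting bound above. I expect the main delicate point to be the forcing step, since the argument must cover simultaneously $v \in I$ (via connectivity of $G[I \cup S]$) and $v \in O$ (via domination) while cleanly discarding small-$|I \cup S|$ degeneracies; the subsequent counting and enumeration are then routine.
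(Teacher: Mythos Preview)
Your proposal is correct and follows essentially the same approach as the paper: the same greedy extraction of $R'$ with pairwise disjoint neighborhoods, the same forcing argument that every $(I,O)$-extension must hit each $N(v)$ (via domination for $v\in O$ and connectivity for $v\in I$, with a degeneracy assumption), and the same $(1-2^{-10\ell})^{|R'|}$ saving. The only cosmetic differences are that the paper phrases the count as a probability over a uniformly random $S\subseteq L$ rather than as a direct product, and that for enumeration the paper invokes Lemma~\ref{lem:enumeration} on the complement family instead of iterating over all tuples $(T_v)_{v\in R'}$ and $S_0$ as you do.
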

\begin{proof}
We assume that $|I\cup O|\geq 2$, since otherwise the claim holds trivially. Let $R_{\textrm{small}}$ be the set of vertices in $R$ of degree less than $10\ell$; by assumption we have $|R_{\textrm{small}}| \geq \frac{|L|}{10}$. Recall that vertices in $R$ have only neighbors in $L$, and every vertex of $L$ has less than $\ell$ neighbors in $R$. Hence, for each vertex $r$ in $R_{\textrm{small}}$ there are at most $10\ell \cdot (\ell-1)$ other vertices in $R_{\textrm{small}}$ that share a common neighbor with $r$. Compute a subset $R'$ of $R_{\textrm{small}}$ as follows. Initially $R'$ is empty and all vertices in $R_{\textrm{small}}$ are unmarked. As long as there is an unmarked vertex $r \in R_{\textrm{small}}$, add $r$ to $R'$ and mark $r'$ as well as all vertices in $R_{\textrm{small}}$ that share a common neighbor with $r$. Terminate when all vertices in $R_{\textrm{small}}$ are marked.

Clearly, no two vertices in the set $R'$ output by the procedure described above can share any common neighbors. Further, for each vertex added to $R'$, at most  $10\ell \cdot (\ell-1) + 1 \leq 10\ell^2$ vertices are marked. Hence, $|R'| \geq \frac{|R_{\textrm{small}}|}{10\ell^2} \geq \frac{|L|}{100\ell^2}$.

Observe that if a subset $S$ of $L$ is an $(I,O)$-extension, then every vertex in $I \cup O$ must have a neighbor in $S$. This holds for every vertex in $O$, because $I \cup S$ needs to dominate this vertex, but there are no edges between $O$ and $I$. For every vertex in $I$ this holds because $G[I \cup S]$ has to be connected, and $I \cup O$ is an independent set of size at least $2$. 

Consider now a subset $S$ of $L$ picked uniformly at random. We upper bound the probability that every vertex in $I \cup O$ has a neighbor in $S$. This probability is upper bounded by the probability that every vertex in $R'$ has a neighbor in $S$. For each vertex $r$ in $R'$, the probability that none of its neighbors is in $S$ is $2^{-d(r)} \geq 2^{-10\ell}$. Since no two vertices in $R'$ share a common neighbor, the events ``$r$ has a neighbor in $S$'' for $r\in R'$ are independent. Therefore, the probability that every vertex in $R'$ has a neighbor in $S$ is upper bounded by
\begin{equation*}
(1-2^{-10\ell})^{|R'|} \leq e^{-2^{-10\ell} \cdot \frac{|L|}{100\ell^2}} = e^{-\frac{|L|}{2^{10\ell}\cdot 100\ell^2}}.
\end{equation*}
The upper bound on the number of minimal $(I,O)$-extensions follows. To enumerate all the minimal $(I,O)$-extensions within the claimed time bound, it is sufficient to enumerate all sets $S \subseteq L$ such that every vertex in $R'$ has at least one neighbor in $S$, and check in polynomial time using Lemma~\ref{lem:polyTimeExtension} whether $S$ is a minimal $(I,O)$-extension. The family of such subsets of $L$ is closed under taking supersets, so to enumerate them we can use the algorithm of Lemma~\ref{lem:enumeration} applied to their complements.
\end{proof}

We can now wrap up the proof of Lemma~\ref{lem:mainBranchingIset}.

\begin{proof}[Proof of Lemma~\ref{lem:mainBranchingIset}]
Let $G$ be a graph and $(L,H,R)$ be a good partition of $V(G)$. Consider a partition of $H \cup R$ into two sets $I$ and $O$. By Lemma~\ref{lem:reduceIO}, we can obtain in polynomial time a graph $G'$ with $V(G)\cap V(G')\supseteq L$, as well as an excellent partition $(L,R',H')$ of $V(G')$, and a partition $(I',O')$ of $R' \cup H'$, such that every subset $S$ of $L$ is a minimal $(I,O)$-extension in $G$ if and only if it is a minimal $(I',O')$-extension in $G'$. Thus, from now on, we may assume without loss of generality that  $L$, $H$ and $R$ is an excellent partition of $V(G)$.

We distinguish between two cases: either there are at most $\frac{|L|}{10}$ vertices in $R$ of degree less than $10\ell$, or there are more than $\frac{|L|}{10}$ such vertices. In the first case, by Lemma~\ref{lem:boundSmallRsmall}, there are at most $2^{|L|} \cdot 2^{-\frac{|L|}{10}}$ minimal $(I,O)$-extensions. Further, the family of all minimal $(I,O)$-extensions can be enumerated in time $2^{|L|} \cdot 2^{-\frac{|L|}{10}} \cdot n^{\Oh(1)}$. In the second case, by Lemma~\ref{lem:bigRsmallBound}, there are at most $2^{|L|} \cdot  e^{-\frac{|L|}{2^{10\ell}\cdot 100\ell^2}}$ minimal $(I,O)$-extensions, and the family of all minimal $(I,O)$-extensions can be enumerated in time $2^{|L|} \cdot  e^{-\frac{|L|}{2^{10\ell}\cdot 100\ell^2}} \cdot n^{\Oh(1)}$. Since  $e^{-\frac{|L|}{2^{10\ell}\cdot 100\ell^2}} \geq 2^{-\frac{|L|}{10}}$, the statement of the lemma follows.
\end{proof}

As argued before, establishing Lemma~\ref{lem:mainBranchingIset} concludes the proof of Lemma~\ref{lem:mainBranching}. We can now use Lemma~\ref{lem:mainBranching} to complete the proof of Lemma~\ref{lem:mainStupid}, and hence also of our main result.

\begin{proof}[Proof of Lemma~\ref{lem:mainStupid}]
To list all minimal connected dominating sets of $G$ it is sufficient to iterate over each of the $2^{n-|L|}$ partitions of $H \cup R$ into $I$ and $O$, for each such partition enumerate all minimal $(I, O)$-extensions $S$ using Lemma~\ref{lem:mainBranching} with $\ell = 14$, and for each minimal extension $S$ check whether $I \cup S$ is a minimal connected dominating set of $G$. Observe that 
\begin{equation*}
|L| \geq \frac{1}{60} \cdot n \geq 10 \cdot 14 \cdot \frac{1}{10^4} \cdot n \geq 10 \cdot 14 \cdot |H|,
\end{equation*}
and that therefore Lemma~\ref{lem:mainBranching} is indeed applicable with $\ell = 14$. Hence, the total number of minimal connected dominating sets in $G$ is upper bounded by
\begin{equation*}
2^{n - |L|} \cdot 2^{|L|} \cdot e^{-\frac{|L|}{2^{10\ell}100\ell^3}} \leq 2^n \cdot 2^{-\frac{n}{60 \cdot 2^{10\ell}100\ell^3}} \leq 2^{n(1-10^{-50})}.
\end{equation*}
The running time bound for the enumeration algorithm follows from the running time bound of the enumeration algorithm of Lemma~\ref{lem:mainBranching} in exactly the same way.
\end{proof}


%



\subparagraph*{Acknowledgements.} The authors gratefully thank the organizers of the workshop {\em{Enumeration Algorithms Using Structure}}, held in Lorentz Center in Leiden in August 2015, where this work was initiated.

\bibliography{references}

\end{document}